\documentclass[journal,twocolumn]{IEEEtran}

\usepackage{amsfonts}
\usepackage{times}
\usepackage[pdftex]{graphicx}
\usepackage{latexsym}
\usepackage{dsfont}
\usepackage{amssymb}
\usepackage{amsmath}
\usepackage{cite}
\usepackage{verbatim}
\usepackage{caption}
\usepackage{algorithmic}
\usepackage[margin=5mm]{subcaption}
\usepackage[outdir=./]{epstopdf}
\newtheorem{theorem}{Theorem}
\usepackage[margin=2.0cm]{geometry}
\usepackage{cite}
\usepackage[pdftex]{graphicx}
\usepackage{psfrag}
\usepackage{amsmath}
\usepackage{amssymb}
\usepackage{epsfig}
\usepackage{color}
\usepackage{epsf}
\usepackage{setspace}
\usepackage{caption}
\usepackage{subcaption}
\usepackage{fancyhdr}
\usepackage{bbm}

\newtheorem{lemma}[theorem]{Lemma}

\newenvironment{proof}{ \textbf{Proof:} }{ \hfill $\Box$}

\newcommand{\figref}[1]{{Fig.}~\ref{#1}}


\def\bb0{{\mathbb{0}}}


\def\bb{{\mathbf{b}}}

\def\b0{{\mathbf{0}}}



\def\bbE{{\mathbb{E}}}

\def\bbI{{\mathbb{I}}}

\def\bbP{{\mathbb{P}}}

\def\bbR{{\mathbb{R}}}


\def\cF{\mathcal{F}}

\def\cL{\mathcal{L}}



\def\sf0{{\mathsf{0}}}





\begin{document}

\title{Performance Analysis of mmWave Ad Hoc Networks}
\author{Andrew Thornburg, Tianyang Bai, and Robert W. Heath Jr.
\thanks{A. Thornburg, T. Bai, and R. W. Heath Jr. are with the Wireless Networking and Communications Group at The University of Texas at Austin, Austin, TX, USA (email: \{andrew.thornburg,tybai,rheath\}@utexas.edu). Parts of this paper appeared at the IEEE 2015 International Conference on Communications and the IEEE 2015 International Conference on Acoustics, Speech, and Signal Processing. This work was supported in part by Army Research Labs under Grant No. W911NF-12-R-0011 and the National Science Foundation under Grant No. 1218338.}}

\maketitle

\begin{abstract}
\emph{Ad hoc} networks provide an on-demand, infrastructure-free means to communicate between soldiers in war zones, aid workers in disaster areas, or consumers in device-to-device (D2D) applications. Unfortunately, \emph{ad hoc} networks are limited by interference due to nearby transmissions. Millimeter-wave (mmWave) devices offer several potential advantages for \emph{ad hoc} networks including reduced interference due to directional antennas and building blockages, not to mention huge bandwidth channels for large data rates.. This paper uses a stochastic geometry approach to characterize the one-way and two-way signal-to-interference ratio distribution of a mmWave \emph{ad hoc} network with directional antennas, random blockages, and ALOHA channel access. The interference-to-noise ratio shows that a fundamental limitation of an \emph{ad hoc} network, interference, may still be an issue. The performance of mmWave \emph{ad hoc} networks is bounded by the transmission capacity and area spectral efficiency. The results show that mmWave networks can support much higher densities and larger spectral efficiencies, even in the presence of blockage, compared with lower frequency communication for certain link distances. Due to the increased bandwidth, the rate coverage of mmWave can be much greater than lower frequency devices.
\end{abstract}
\section{Introduction} \label{sec:intro}
 
Next-generation \emph{ad hoc} networks, such as military battlefield networks, high-fidelity emergency response video, or device-to-device (D2D) entertainment applications, must offer high data rates and ubiquitous coverage. Typically, \emph{ad hoc} networks are limited by the uncoordinated interference created by proximate transmitters. Measurement studies and analysis of indoor, commercial wireless PAN/LAN systems show that mmWave systems may experience less interference due to directional antennas and building blockage in addition to offering massive bandwidth \cite{RapSun13,ZhuDou11,SinMud11,RapHeaBook,Rappaport2015}. While these results are promising, the potential performance of outdoor mmWave \emph{ad hoc} networks incorporating key features like directional antennas and building blockage is not yet understood. 

Prior work has leveraged stochastic geometry to calculate the performance of \emph{ad hoc} networks \cite{WebAnd10}. The transmission capacity is a information theoretic performance metric calculated using stochastic geometry \cite{AndSha08, Hunter08, HuaAnd12, WebAnd10}. The transmission capacity is the maximum spatial density (users per m$^2$) of transmitters given an outage constraint and is well studied, see \cite{VazHea12,VazTru11,WebAnd10}, and references therein. A related metric is the area spectral efficiency which yields the bits/sec/Hz/m$^2$ of a network \cite{AndGan10}. Both metrics are widely used to compare and contrast transmission techniques and network architectures.


Beamforming has been analyzed with stochastic geometry and other methods in \emph{ad hoc} networks under the term \emph{smart antennas}, phased arrays, or adaptive antennas. Prior work on \emph{ad hoc} networks considered smart antennas and other directional antennas \cite{Hunter08,RamRed05,BelFou02,Win06,ChoYanRam:On-designing-MAC-protocols-for-wireless:06}. The transmission capacity of \emph{ad hoc} networks with directional antennas was computed in \cite{Hunter08} assuming small-scale Rayleigh fading. A directional MAC testbed was benchmarked in \cite{RamRed05}. In \cite{BelFou02}, the analyses and performance of the system assumes Rayleigh fading. The optimization of the MAC for directional antennas was discussed in \cite{Win06,ChoYanRam:On-designing-MAC-protocols-for-wireless:06}. While the results are frequency agnostic, the results are built around channel models that reasonably apply only for sub-mmWave systems.

Blockage is an important impairment in mmWave \emph{ad hoc} systems. Work in \cite{miweba,Rappaport2015,RapSun13} showed that the path-loss models were different between line-of-sight (LOS) and non-line-of-signt (NLOS) due to building blockage. This was the basis of the stochastic geometry analysis in \cite{Bai2014} which was applied to cellular systems. The exclusion zone of the cellular system model in \cite{Bai2014} is not applicable to \emph{ad hoc} networks. In the cellular model, the users fall in the Voronoi cell of the base station. The strongest interferer due to the Voronoi structure must lie outside a ball centered at the receiver. While in an Aloha \emph{ad hoc} network, an interfering transmitter can be arbitrary close \cite{baccelli_app}. In \cite{Gowaikar2006}, blockage results from small-scale fading. At mmWave frequencies, blockages are due to obstacles like buildings which heavily attenuate mmWave signals \cite{Bai2013}. The effect of blockage is developed in \cite{Bai2014} for mmWave cellular networks; rate trends for cellular are captured with real-world building footprints in \cite{Mandar}. A LOS-ball approach is taken in \cite{Singh2015} for backhaul networks which is validated using real-world building data. Wearable networks which quantified the effect of human blockage was considered in \cite{Kiran}. No consideration has been made in the literature, however, to the effect of blockage on outdoor mmWave \emph{ad hoc} networks.


In this paper, we formulate the performance of mmWave \emph{ad hoc} networks in a stochastic geometry framework. We incorporate random factors of a mmWave \emph{ad hoc} network such as building blockage, antenna alignment, interferer position, and user position. Using a similar framework, we compare and contrast the performance against a lower frequency UHF \emph{ad hoc} network. The main contributions of the paper can be summarized as follows:
\begin{itemize}
	\item Derivation of a bound for mmWave \emph{ad hoc} network signal-to-interference-and-noise (SINR) complimentary cumulative distribution function (CCDF). A version of the SINR proof for line-of-sight communication appeared in \cite{Thornburg2014}; we have, however, strengthened the proof to bound the result rather than approximate it as well as extend it to non-line-of-sight. Using the SINR CCDF, a Taylor approximation is used to compute the transmission capacity and area spectral efficiency of the network. We argue for LOS-aware protocols due to the large performance increase from LOS communication at mmWave. Lastly, we calculate the effect of random receiver location on performance.
	\item Computation of the interference-to-noise ratio (INR). The interference-to-noise ratio distribution (INR) derivation of \cite{Thornburg2014b} is similarly strengthened; additionally, we include discussion of the INR when a network is operating at the transmission capacity.
	\item Characterizing the effect of two-way communication on the transmission capacity and area spectral efficiency. We show that optimal bandwidth allocation leads to large gains in both performance metrics.
\end{itemize}

The rest of the paper is organized as follows. Section II provides the system model and assumptions used in the paper. Section III derives the SINR distribution, transmission capacity, ASE, and INR distribution for the one-way network. Section IV quantifies the transmission capacity and ASE for two-way networks. We present the results in Section V and conclude the paper in Section VI. Throughout the paper, $\bbP[X]$ is the probability of event $X$ and $\bbE$ is the expectation operator.

\section{System Model}\label{sec:sys}


\subsection{Network Model}
Consider an \emph{ad hoc} network where users act as transmitter or receiver.  We use the dipole model of \cite{baccelli_app} where each transmitter in the network has a corresponding receiver at distance $r$. The transmitters operate at constant power with no power control. The location of the transmitting users within the network are points from a homogeneous Poisson point process (PPP) $\Phi$ on the Euclidean plane $\bbR^2$ with intensity $\lambda_\text{t}$, which is standard for evaluating the transmission capacity of \emph{ad hoc} networks, see \cite{WebAnd10} and the references therein. We analyze performance at the \emph{typical} dipole pair at the origin. The performance of the typical dipole characterizes the network performance thanks to Slivnyak's Theorem \cite{baccelli_app}. The channel is accessed using a synchronized slotted Aloha-type protocol with parameter $p_\text{tx}$. During each block, a user transmits with probability $p_\text{tx}$ or remains silent with probability $(1-p_\text{tx})$. We condition on a fully outdoor network. We define the effective transmitting user density, used throughout the rest of the paper, as
\begin{equation}
\lambda = p_\text{tx}p_\text{out}\lambda_\text{t},\label{efflam}
\end{equation}
where $p_\text{out}$ is the probability a user is outdoors. A homogeneous PPP is perhaps overly simplistic, but we leave the investigation of mmWave \emph{ad hoc} networks modeled with non-homogeneous PPPs to future work. We leave the optimization of $p_{\text{tx}}$ to future work, but provide a framework to find the solution in Section IV. The analysis of \cite{Bai2013} shows how to compute $p_\text{out}$ using stochastic geometry.

\subsection{Use of Beamforming}
Now we explain the role of beamforming in the mmWave signal model. The natural approach to combat increased omni-directional path-loss of mmWave is to use a large antenna aperture, which is achieved using multiple antennas \cite{AkoElA12,RapSun13,Pi2011}. The resulting array gain overcomes the frequency dependence on the path-loss and allows mmWave systems to provide reasonable link margin.  We denote the path-loss intercept as $A = 20\text{log}_{10}\left(\frac{2\pi d_\text{ref}}{\lambda_\text{ref}}\right)$ with $d_\text{ref} = 1m$ \cite{Rappaport2015} and $\lambda_\text{ref}$ as the carrier wavelength.

We assume that adaptive directional beamforming is implemented at both the transmitter and receiver where a main lobe is directed towards the dominate propagation path while smaller sidelobes direct energy in other directions. No attempt is made to direct nulls in the pattern to other receivers \cite{AkoKou11}; this is an interesting problem for future work. To facilitate the analysis, we approximate the actual beam pattern using a sectored model, as in \cite{Hunter08}. The beam pattern, $G_{\theta,M,m}$, is parameterized by three values: main lobe beamwidth ($\theta$), main lobe gain ($G$), and back lobe gain ($g$). Such an antenna is shown in Fig. \ref{fig:antenna_pic} where the mainlobe is 90$^\circ$, 30$^\circ$, or 9$^\circ$ with gain of 3dB, 10dB, or 15dB, respectively. The interferers are also equipped with directional antennas. Because the underlying PPP is isotropic in $\bbR^2$, we model the beam-direction of the typical node and each interfering node as a uniform random variable on $[0,2\pi]$. Thus, the effective antenna gain of the interference seen by the typical node is a discrete random variable described by
\begin{equation}
G_i = 
\begin{cases}
GG & \text{w.p.} \hspace{2mm} p_{\text{GG}}=(\frac{\theta}{\pi})^2 \\
Gg & \text{w.p.} \hspace{2mm} p_{\text{Gg}}=2\frac{\theta}{\pi}\frac{\pi-\theta}{\pi} \\
gg & \text{w.p.} \hspace{2mm} p_{\text{gg}}=(\frac{\pi-\theta}{\pi})^2 \\
\end{cases}.\label{eq:ant_gain}
\end{equation}

The typical dipole performs perfect beam alignment and thus has an antenna gain of $GG$. We note that the sectored model is pessimistic with regards to side band power. A typical uniform linear array, for instance, will consist of a main-lobe and many less powerful side-lobes each separated by nulls. The sectored model takes the most powerful side-lobe as the entire side-lobe (i.e. on average, the sectored model provides higher side-lobe power). Other work ignores the side-lobe power \cite{SinMud11}. 

 \begin{figure}
 	\centering
 	\includegraphics[width=\linewidth]{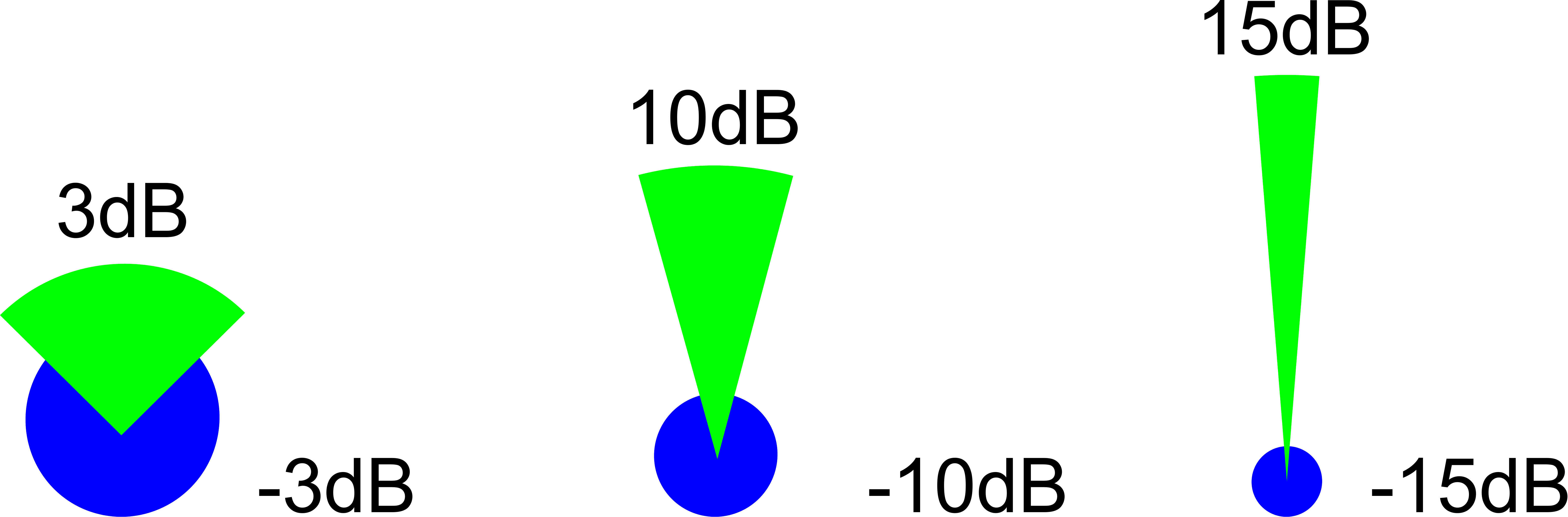}
 	\caption{An illustration of the sectored antenna model we use. Beamwidths are 90$^\circ$, 30$^\circ$, and 9$^\circ$, respectively.}
 	\label{fig:antenna_pic}
 \end{figure} 

\subsection{Blockage Model}
The signal path can be either unobstructed/LOS or blocked/NLOS, each with a different path-loss exponent. This distinction is supported by empirical measurements conducted in Austin, Europe, and Manhattan \cite{RapBen12,RapSun13, Rappaport2015,miweba}. The measurements conducted by \cite{Rappaport2015} include various vertical heights such as building (e.g. 17m) and closer-to-pedestrian (e.g. 7m). We believe the 7m measurements to be applicable to \emph{ad hoc} networks. The measurements of \cite{Rappaport2015}, at 28, 38, 60, and 73GHz, show the path-loss difference of LOS/NLOS. Additionally, a European consortium, MiWeba, has also conducted peer-to-peer urban canyon measurements made similar conclusions \cite{miweba}. One reason for larger difference in LOS/ NLOS path losses is that diffraction becomes weaker in mmWave, as the carrier frequency goes high \cite{miweba}. Besides, the Fresnel zone, whose size is proportional to the square of the wavelength, becomes smaller at mmWave. Therefore, the mmWave signals will be less likely affected by objects in the LOS links, and transmit as in free space \cite{miweba}. The work of \cite{Bai2013} assumes no particular architecture for the 2-dimensional stochastic geometry derivation. The work captures the distribution and placement of buildings with potential applications to cellular networks and \emph{ad hoc} networks.

The blockages are modeled as another Poisson point process of buildings independent of the communication network. Each point of the building PPP is independently marked with a random width, length, and orientation. Under such a scenario, it was shown that by using a random shape model of buildings to model blockage \cite{Bai2012,Bai2013}, the probability that a communication link of outdoor users is LOS is $\bbP[\text{LOS}] = e^{-\beta d} $,where $d$ is the link length and 
\begin{equation}
\beta = \frac{2\lambda_\text{b} (\bbE[W]+\bbE[L])}{\pi},\label{eq:beta}
\end{equation}
with $\lambda_\text{b}$ as the building PPP density, $\bbE[W]$ and $\bbE[L]$ are the average width and length, respectively, of the buildings.We note that the work of \cite{Bai2013} includes a parameter to capture the setting where transmitters are indoors, but this is not required in our model as we analyze outdoor networks and therefore condition on outdoor transmitters. A different analysis would be required for indoor networks. This is reasonable because because mmWave signals are heavily attenuated by many common building materials \cite{RapSun13}. For example, brick exhibits losses of 30dB at 28 GHz. While the leakage of indoor mmWave signals might be possible through open windows, we ignore the potential interference from indoors and focus solely on the outdoor setting.

The path-loss exponent on each interfering link is a discrete random variable described by   
\begin{equation}\label{eq:antenna} 
\alpha_i = 
\begin{cases}
\alpha_\text{L} & \text{w.p.} \hspace{2mm} \ell_p(x) \\
\alpha_\text{N} & \text{w.p.} \hspace{2mm} \ell_p(x) \\
\end{cases},
\end{equation}
where $\alpha_\text{L}$ and $\alpha_\text{N}$ are the LOS and NLOS path-loss exponents and $\ell_p(x)$ is the probability a link of length $x$ is LOS. Fig. \ref{fig:network} shows an example realization of the \emph{ad hoc} network. The density and mean building size are modeled to match The University of Texas at Austin \cite{Bai2013}. We ignore correlations between blockages, as in \cite{Bai2013}; the blockage on each link is determined independently. While the correlations might affect the tail behavior of the SINR distribution \cite{BaccelliZhang2015}, it was shown that the difference in the practical operating SINR range is small when ignoring the correlation \cite{Bai2012}. Moreover, simulations that use real geographical data \cite{Singh2015,WeiLu} match analytical expressions ignoring blockage correlation. 

\subsection{SINR}
To help with the analytical tractability, we model the fading as a Nakagami random variable with parameter $N_\text{h}$. Consequently, the received signal power, $h$, can be modeled as a gamma random variable, $h\sim\Gamma(N_\text{h},1/N_\text{h})$. As $N_\text{h} \rightarrow \infty$, the fading becomes a deterministic value centered on the mean, whereas $N_\text{h} = 1$ corresponds to Rayleigh fading.
  
The SINR is the basis of the performance metrics in this paper. $P_\text{t}$ is the transmit power of each dipole, $G_0$ is the antenna gain corresponding to both main beams aligned, $h_0$ is the fading power at the dipole of interest, $A$ is the path-loss intercept, $r$ is the fixed dipole link length, $\alpha_0$ is the path-loss exponent, and $N_0$ is the noise power. The interference term for each interfering dipole transmitter is indexed with $i$: $d_i$ is used to represent the distance from the interferer to transmitter of interest, $h_i$ is each interference fading power distributed IID according to a gamma distribution, and $M_i$ is the discrete random antenna gain distributed IID according to \eqref{eq:ant_gain}. The SINR is defined as \cite{baccelli_app}
\begin{equation}
\text{SINR} = \frac{P_\text{t} G_0 h_0 A r^{-\alpha_0}}{N_0 + \sum_{i \in \Phi} P_\text{t} M_i h_i A d_i^{-\alpha_i}}.\label{eq:SINR}
\end{equation}

\begin{figure}
  \centering
  \begin{subfigure}{.5\textwidth}
  \centering
  \includegraphics[width=.65\linewidth]{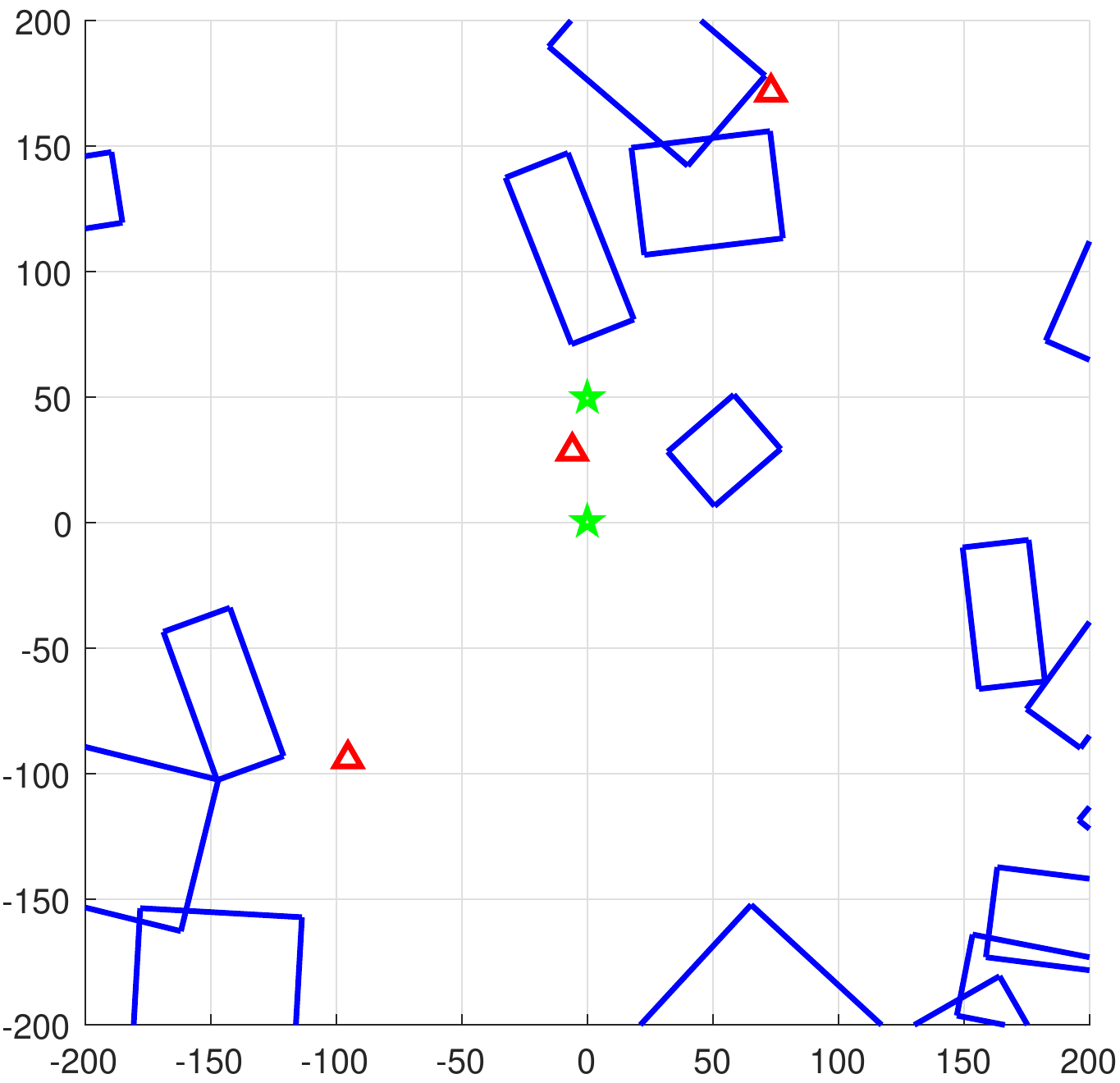}
 \caption{}
  \label{fig:sparse_network}
  \end{subfigure}
  \begin{subfigure}{.5\textwidth}
  \centering
  \includegraphics[width=.65\linewidth]{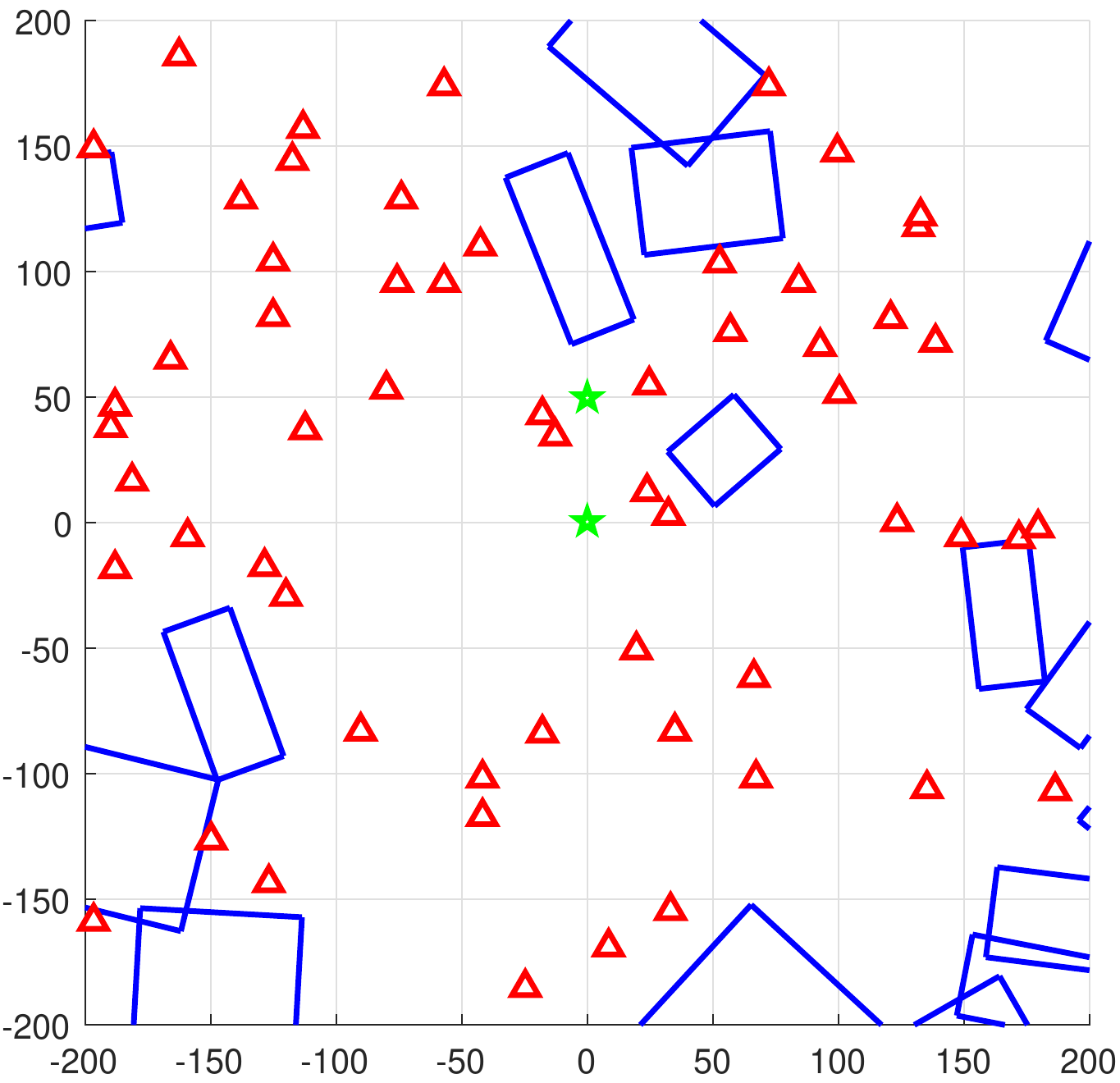}
  \caption{}
  \label{fig:dense_network}
  \end{subfigure}
	\caption{Example realizations of the random network with blockage. The blue rectangles are random boolean buildings which attenuate the signal. The red triangles are the Poisson point process of interferers. The green star represents the \emph{typical node}. The user densities are what we call \emph{sparse} (a) and \emph{dense} (b) when discussing the results.}
	\label{fig:network}
  \end{figure}


\section{One-Way Ad Hoc Communication}\label{sec:oneway}
In this section, we derive the SINR distribution for one-way transmission in the \emph{ad hoc} network described in Section \ref{sec:sys}. We first characterize the \emph{overall} SINR complimentary cumulative distribution function (CCDF) by analyzing the network when the desired link is either LOS and NLOS. Next, we define the \emph{protocol-gain} by limiting communication to LOS links and argue why this is a useful concept. We quantify the effect of random receiver distance. We show that neglecting noise and NLOS interference does not change the SINR distribution, suggesting that mmWave \emph{ad hoc} networks are \emph{LOS interference limited} in dense networks. This is reinforced by the derivation of the INR cumulative distribution function CDF. Lastly, the performance metrics, transmission capacity and area spectral efficiency, are computed using a bound of the SINR CCDF.

\subsection{SINR Distribution}\label{sec:sinroneway}
We define the CCDF of the SINR as
\begin{equation}\label{eq:ccdf}
P_\text{c}(T) = \bbP[\text{SINR}\geq T],
\end{equation} 
where $T$ is target SINR. In other work, \eqref{eq:ccdf} is referred to as the \emph{coverage probability} \cite{baccelli_app,WebAnd10,Hunter08}. We can use the law of total probability to expand the SINR CCDF as \cite{Bai2014}
\begin{equation}
P_\text{c}(T) = P_\text{c}^{\text{L}}(T)\bbP[\text{LOS}]+ P_\text{c}^{\text{N}}(T)\bbP[\text{NLOS}]\label{eq:full_eq},
\end{equation}
where $P_\text{c}^{\text{L}}$ and $P_\text{c}^{\text{N}}$ are the conditional CCDFs on the event that the main link is LOS and NLOS, respectively. The SINR CCDF conditioned on the link being LOS is \cite{Bai2014}
\begin{equation}\label{eq:cond_ccdf}
P^\text{L}_\text{c}(T) = \bbP[\mathrm{SINR} \geq T | \text{LOS}]. 
\end{equation}

Going forward, for brevity, we will drop the conditional notation when using $P^\text{L}_c$. Using (\ref{eq:SINR}), 
\begin{align}\allowdisplaybreaks
P^\text{L}_\text{c}(T) &= \bbP\bigg[\frac{P_\text{t} G_0 h_0 A r^{-\alpha_\text{L}}}{N_0 + \sum_{i \in \Phi} P_\text{t} M_i h_i A d_i^{-\alpha_i}} \geq T\bigg] \\
&= \bbP\bigg[{h_0} \geq \frac{T r^{\alpha_\text{L}}}{P_\text{t} G_0 A}\bigg({N_0 + \sum_{i \in \Phi} \frac{P_\text{t} M_i h_i A }{d_i^{\alpha_i}}\bigg)}\bigg]\\
&= \bbP\bigg[{h_0} \geq \frac{T r^{\alpha_\text{L}}}{P_\text{t} G_0 A}({N_0+ I_\Phi)}\bigg] \\
&= 1 - \bbP\bigg[{h_0} < \frac{T r^{\alpha_\text{L}}}{P_\text{t} G_0 A}({N_0+ I_\Phi)}\bigg]\\
\begin{split} 
&= 1 - \int_{0}^{\infty}\bbP\bigg[{h_0} < \frac{T r^{\alpha_\text{L}}}{P_\text{t} G_0 A}({N_0+ x)}|I_\Phi = x\bigg]\times \\
&\hspace{5mm}p_\Phi(x)dx\label{eq:SINRprob}, 
\end{split}
\end{align}
where $I_\Phi$ is the aggregate interference due to the PPP and $p_\Phi$ is the probability distribution function of the PPP. We introduce the following Lemma to aid the analysis.
\begin{lemma}
	The cumulative distribution function of a normalized gamma random variable with integer parameter $k$, $y\sim\Gamma(k,1/k)$, can be tightly lower bounded as
	\begin{equation*}
	\left[1-e^{-az}\right]^k < \bbP\left[y < z\right]
	\end{equation*}
	with $a = k(k!)^{-1/k}$.\\
	\begin{proof}
		See Appendix A.
	\end{proof}
\end{lemma}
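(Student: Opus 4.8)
The plan is to normalize, reduce to a sharp inequality for the regularized incomplete gamma function, and then prove that inequality by a monotone-ratio argument. Since $k$ is an integer and $y\sim\Gamma(k,1/k)$ has unit mean, the variable $ky$ is $\Gamma(k,1)$, so $\bbP[y<z]=P(k,kz)$, where $P(k,\cdot)$ denotes the regularized lower incomplete gamma function (equivalently $P(k,u)=1-e^{-u}\sum_{n=0}^{k-1}u^n/n!$). Setting $x=kz$ and $c=(k!)^{-1/k}$ so that $az=cx$, the claim becomes $(1-e^{-cx})^k<P(k,x)$ for all $x>0$, and after taking $k$-th roots it is equivalent to $1-e^{-cx}<P(k,x)^{1/k}$. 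Both sides vanish at $x=0$ and tend to $1$ as $x\to\infty$, so the content of the lemma is that the right side dominates throughout, which I would establish as a statement about which curve leaves the origin and approaches $1$ faster.

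To capture this uniformly, I would introduce $\phi(x)=-\tfrac1x\ln\!\bigl(1-P(k,x)^{1/k}\bigr)$, defined on $(0,\infty)$, so that the desired inequality is exactly $\phi(x)>c$. The two boundary limits are the key orientation. As $x\to0^+$, the expansion $P(k,x)\sim x^k/k!$ gives $P(k,x)^{1/k}\sim (k!)^{-1/k}x=cx$, hence $\phi(x)\to c$; this is precisely the calibration that forces the constant to be $a=k(k!)^{-1/k}$, and it also explains why the bound is ``tight'' (the two curves are tangent at the origin). As $x\to\infty$, using $1-P(k,x)\sim x^{k-1}e^{-x}/(k-1)!$ and $1-P^{1/k}\sim \tfrac1k(1-P)$, one finds $\phi(x)\to1$, and since $c<1$ for $k\ge2$ this limit strictly exceeds $c$. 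Consequently, if $\phi$ is strictly increasing on $(0,\infty)$, then $\phi(x)>\lim_{x\to0^+}\phi=c$ for every $x>0$, which is the claim (with equality only in the degenerate case $k=1$, where $c=1$ and both sides equal $1-e^{-x}$).

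The main obstacle is therefore the monotonicity of $\phi$. I would reduce it to a concavity statement: writing $\psi(x)=\ln\!\bigl(1-P(k,x)^{1/k}\bigr)$ we have $\psi(0)=0$, and $\phi=-\psi/x$ is increasing if and only if the chord slope $\psi(x)/x$ is decreasing, which holds whenever $\psi$ is concave. Differentiating with $R:=P(k,x)^{1/k}$ and using $R'=\tfrac1k R^{\,1-k}\,P'$, the condition $\psi''<0$ is equivalent to $R''(1-R)+(R')^2>0$. This is the genuinely technical step: $R$ is neither globally convex nor concave (it is linear near $0$ and flattens toward $1$), so the inequality must be verified through the explicit density $P'(k,x)=x^{k-1}e^{-x}/(k-1)!$, the relation $P''=P'\bigl(\tfrac{k-1}{x}-1\bigr)$, and the log-concavity of the gamma density, keeping careful track of signs over the whole half-line. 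I expect the bookkeeping here to be the only delicate part of the argument.

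As an independent check on both the statement and the value of $a$, I would also note the probabilistic reading: $(1-e^{-az})^k$ is the law of the maximum of $k$ i.i.d.\ exponentials of rate $a$, which by the R\'enyi representation equals in distribution a sum of independent exponentials with rates $a,2a,\dots,ka$, whereas $\bbP[y<z]$ is the law of a sum of $k$ i.i.d.\ exponentials of rate $k$. The choice $a=k(k!)^{-1/k}$ is exactly what makes the product of the two rate vectors coincide ($a^k k!=k^k$), i.e.\ equalizes the geometric means of the rates; the lemma then asserts that the equal-rate sum is stochastically smallest, a fact consistent with majorization-based orderings of sums of independent exponentials. I would use this viewpoint to validate the direction of the inequality and the extremality of $a$, while carrying out the actual proof via the analytic monotonicity argument above.
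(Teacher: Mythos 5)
Your setup is sound and your reduction is the right one: writing $\bbP[y<z]=P(k,kz)$ and substituting $x=kz$, $c=(k!)^{-1/k}$ turns the lemma into $\phi(x):=-\tfrac1x\ln\bigl(1-P(k,x)^{1/k}\bigr)>c$ on $(0,\infty)$, and your two limit computations ($\phi\to c$ at $0^+$, $\phi\to 1$ at $\infty$) are correct and do explain why $a=k(k!)^{-1/k}$ is the extremal constant. The problem is that everything then hinges on the monotonicity of $\phi$, which you reduce to the concavity of $\psi(x)=\ln\bigl(1-P(k,x)^{1/k}\bigr)$, i.e.\ to $R''(1-R)+(R')^2>0$ with $R=P^{1/k}$ — and this you do not prove. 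You describe it as ``bookkeeping,'' but it is the entire mathematical content of the lemma: numerically (e.g.\ $k=2$) the quantity $R''(1-R)+(R')^2$ is positive by a razor-thin margin over the whole half-line (at $x=10$ the margin is below one percent of either term, and the two leading asymptotic contributions cancel exactly as $x\to\infty$), so no sign-tracking of the explicit density and its log-concavity will fall out easily; a genuine cancellation must be controlled uniformly in $x$ and $k$. Until that inequality is established, the argument is a plan rather than a proof. (The Rényi/majorization paragraph is a nice consistency check on the direction and on the constant, but, as you say yourself, it is not load-bearing.)

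For comparison, the paper does not attempt any of this: its Appendix A simply invokes Theorem~1 of the cited reference (Alzer's inequality $[1-e^{-\beta x^p}]^{1/p}<\int_0^x e^{-t^p}\,dt\,/\,\Gamma(1+1/p)$ with $\beta=[\Gamma(1+1/p)]^{-p}$, $p\in(0,1)$) and performs the change of variables $k=1/p$, $x^p=kz$ to land on the stated bound. So the hard analytic fact is imported, not proved. You are in effect trying to reprove Alzer's theorem from scratch, and you have stopped exactly at its difficult step. To make your route complete you would either have to carry out the concavity proof in full (or find a weaker but sufficient statement, since decreasing chord slope of $\psi$ does not actually require global concavity), or do what the paper does and cite the known incomplete-gamma inequality.
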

Now we can bound (\ref{eq:SINRprob}) as
\begin{align}
P^\text{L}_\text{c}(T) &< 1 - \int_{0}^{\infty}\bigg[\bigg(1-e^{-a  \frac{T r^{\alpha_\text{L}}}{P_\text{t} G_0 A}({N_0+ x)}}\bigg)^{N_\text{h}}\bigg]p_\Phi(x)dx\\
 &= 1 -\bbE_\Phi \bigg[\bigg(1-e^{-a  \frac{T r^{\alpha_\text{L}}}{P_\text{t} G_0 A}({N_0+ I_\Phi)}}\bigg)^{N_\text{h}}\bigg]\\
&= \sum_{n = 1}^{N_\text{h}} {N_\text{h} \choose n} (-1)^{n+1} \bbE_\Phi \big[e^{-a n  \frac{T r^{\alpha_\text{L}}}{P_\text{t} G_0 A}({N_0+ I_\Phi)}}\big] \label{eq:SINR_approx},
\end{align}
where (\ref{eq:SINR_approx}) is from the Binomial Theorem \cite{Bai2014}.

Because the correlation between each random blockage is ignored, each point in the building blockage PPP is independent which permits the use of the thinning theorem from stochastic geometry \cite{baccelli_theory}. We further thin $\Phi$ based on the random antenna gain. Essentially, we can now view the interference as 6 independent PPPs such that
\begin{equation}
I_\Phi = I_{\Phi_{\text{LOS}}}^{\text{GG}}+I_{\Phi_{\text{LOS}}}^{\text{Gg}}+I_{\Phi_{\text{LOS}}}^{\text{gg}}+I_{\Phi_{\text{NLOS}}}^{\text{GG}}+I_{\Phi_{\text{NLOS}}}^{\text{Gg}}+I_{\Phi_{\text{NLOS}}}^{\text{gg}},\label{eq:subPPP}
\end{equation}
with the superscripts representing the discrete random antenna gain defined in (\ref{eq:ant_gain}) and each interfering node either a LOS transmitter or NLOS transmitter. We can distribute the expectation in (\ref{eq:SINR_approx}) as
\begin{equation}
P^\text{L}_\text{c} < \sum_{n=1}^{N_\text{h}} (-1)^{n+1}{N_\text{h} \choose n} e^{-nK_\text{L}TN_0} \prod_{i} \prod_{j}\bbE_{I_{\Phi_{j}^i}}\big[e^{-nK_\text{L}TI_{\Phi^i_{j}}}\big] \label{eq:PCL}
\end{equation}
with $i \in \{\text{GG},\text{Gg},\text{gg}\}$, $j \in \{\text{LOS},\text{NLOS}\}$, and $K_\text{L} = \frac{a r^{\alpha_\text{L}}}{P_\text{t} G_0 A}$. In \eqref{eq:PCL}, $i$ and $j$ index each interference sub-PPP. In essence, each expectation is a the Laplace transform of the associated sub-PPP, and each of these Laplace transforms are multiplied together. 

Using stochastic geometry, we can analytically represent the first Laplace expectation term as
\begin{align}
\bbE&\bigg[e^{-n K_\text{L} T I_{\Phi_{\text{LOS}}}^\text{GG}}\bigg] = \\ \notag
 &e^{-2\pi \lambda p_\text{GG} \int_{0}^{\infty}\left(1 - \bbE_h\big[e^{-\frac{n K_\text{L} T P_\text{t} A GG h}{x^{\alpha_\text{L}} }}\big]\right)\ell_p(x)x \text{d}x },\label{eq:SINRsingleLap}
\end{align}
where $p_\text{GG}$ and $\ell_p(x)$, capture the thinning of the PPP for the first sub-PPP in \eqref{eq:subPPP}. Notice that $\bbE_h[e^{\eta h}]$ corresponds to the moment-generating function (MGF) of the random variable $h$ (e.g. gamma). A similar approach was taken in \cite{Bai2014} for the analysis of mmWave cellular networks. The final Laplace transform of the PPP is given as
\begin{equation}
\cL_{I_{\Phi_{\text{LOS}}}^{\text{GG}}} =  e^{-2\pi \lambda  p_\text{GG}\int_{0}^{\infty}\big(1 -1 / (1+\frac{n Q_\text{L} T }{x^{\alpha_\text{L}} N_\text{h}})^{N_\text{h}}\big)  \ell_p(x) x \text{d}x }.\label{eq:SINRlaptrans}
\end{equation}
with $Q_\text{L} = K_\text{L} P_\text{t} GG A = \frac{a r^{\alpha_\text{L}}GG}{G_0}$. Each other Laplace transform is computed similarly, but noting that $p_\text{GG}$, $\ell_p(x)$, and $x^{\alpha_\text{L}}$ will change depending on the antenna gain of the sub-PPP and if the sub-PPP is LOS or NLOS. We can summarize the results in the following theorem
 
 \setcounter{theorem}{0}
 \begin{theorem}
 	The SINR distribution of an outdoor mmWave \emph{ad hoc} network can be tightly upper bounded by
 	\begin{align}
 		\begin{split}
 	& P_\text{c}(T) <  \sum_{n = 1}^{N_\text{h}} {N_\text{h} \choose n} (-1)^{n+1} e^{-nK_\text{L}TN_0}  e^{-2\pi \lambda (\kappa_\text{L} + \kappa_\text{N})}\ell_p(r)\\ &+ \sum_{n = 1}^{N_\text{h}} {N_\text{h} \choose n} (-1)^{n+1} e^{-nK_\text{N}TN_0}  e^{-2\pi \lambda (\xi_\text{L}  + \xi_\text{N}  )}\bigg(1-\ell_p(r)\bigg)
 	\end{split}
\end{align}
where
\begin{align}
\kappa_\text{L} &= \sum_{i} p_i\int_{0}^{\infty}\bigg[1 -1 / \bigg(1+\frac{n Q_\text{L} T }{x^{\alpha_\text{L}} N_\text{h}}\bigg)^{N_\text{h}}\bigg]  \ell_p(x) x \text{d}x \\
\kappa_\text{N} &= \sum_{i} p_i\int_{0}^{\infty}\bigg[1 -1 / \bigg(1+\frac{n Q_\text{L} T }{x^{\alpha_\text{L}} N_\text{h}}\bigg)^{N_\text{h}}\bigg] \big(1- \ell_p(x)\big) x \text{d}x \\
\xi_\text{L}  &= \sum_{i} p_i\int_{0}^{\infty}\bigg[1 -1 / \bigg(1+\frac{n Q_\text{N} T }{x^{\alpha_\text{N}} N_\text{h}}\bigg)^{N_\text{h}}\bigg]  \ell_p(x) x \text{d}x\\
\xi_\text{N}   &= \sum_{i} p_i\int_{0}^{\infty}\bigg[1 -1 / \bigg(1+\frac{n Q_\text{N} T }{x^{\alpha_\text{N}} N_\text{h}}\bigg)^{N_\text{h}}\bigg] \big(1- \ell_p(x)\big) x \text{d}x 
\end{align}
with $K_\text{L} = \frac{a r^{\alpha_\text{L}}}{P_\text{t} G_0 A}$, $K_\text{N} = \frac{a r^{\alpha_\text{N}}}{P_\text{t} G_0 A}$, $i\in\{\text{GG,Gg,gg}\}$, $Q_\text{L} =\frac{a r^{\alpha_\text{L}}M_i}{G_0}$, and $Q_\text{N} = \frac{a r^{\alpha_\text{N}}M_i}{G_0}$.
 \end{theorem}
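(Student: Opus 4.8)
The plan is to assemble the stated bound from the pieces already developed, combining the law of total probability over the LOS/NLOS state of the desired link with the sub-PPP Laplace-transform computation. First I would start from the decomposition \eqref{eq:full_eq}, writing $P_\text{c}(T) = P_\text{c}^\text{L}(T)\,\ell_p(r) + P_\text{c}^\text{N}(T)\,(1-\ell_p(r))$, since the desired link has fixed length $r$ and is LOS with probability $\bbP[\text{LOS}] = \ell_p(r)$. It then suffices to bound $P_\text{c}^\text{L}$ and $P_\text{c}^\text{N}$ separately and weight them: the LOS term will produce the $\ell_p(r)$ summand with the $\kappa_\text{L}+\kappa_\text{N}$ exponent, and the NLOS term the $(1-\ell_p(r))$ summand with the $\xi_\text{L}+\xi_\text{N}$ exponent.

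For the LOS conditional CCDF I would invoke Lemma 1 to lower bound the gamma CDF of $h_0$ inside \eqref{eq:SINRprob}. Because $P_\text{c}^\text{L} = 1 - \bbP[h_0 < \cdot\,]$, lower bounding the CDF upper bounds $P_\text{c}^\text{L}$, which is exactly the inequality leading to \eqref{eq:SINR_approx}. Expanding $1-(1-u)^{N_\text{h}}$ with the binomial theorem and using linearity of expectation turns this into the finite alternating sum over $n$, so the single application of Lemma 1 is the \emph{only} inequality in the whole argument; everything downstream is an identity. I would emphasize this, since the alternating signs might otherwise raise the worry that the bound direction is not preserved term by term — expanding the entire power at once, rather than bounding each term, sidesteps any sign issue.

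Next I would evaluate the interference Laplace transform. Using that blockage correlations are ignored and antenna gains are drawn independently, the thinning theorem splits $\Phi$ into the six independent sub-PPPs of \eqref{eq:subPPP}, so $\bbE[e^{-nK_\text{L}T I_\Phi}]$ factors into the product of six sub-PPP Laplace transforms as in \eqref{eq:PCL}. Each factor is computed from the PPP Laplace functional together with the gamma MGF of $h$, giving the closed form \eqref{eq:SINRlaptrans}; the three antenna-gain variants contribute the weights $p_i$ that assemble into $\sum_i p_i$, and the LOS/NLOS interferer split contributes the $\ell_p(x)$ and $1-\ell_p(x)$ thinning factors with their respective path-loss exponents. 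Since each factor is an exponential, the product is the exponential of the sum of the six exponents, which collapses to $e^{-2\pi\lambda(\kappa_\text{L}+\kappa_\text{N})}$. Repeating the computation verbatim with $K_\text{N}$ and $Q_\text{N}$ in place of $K_\text{L}$ and $Q_\text{L}$ yields the NLOS main-link term $e^{-2\pi\lambda(\xi_\text{L}+\xi_\text{N})}$, and combining both with the total-probability weights produces the theorem.

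I expect the main obstacle to be bookkeeping rather than any single hard estimate: correctly tracking, across all six sub-PPPs and both desired-link states, which path-loss exponent, which antenna-gain weight $p_i$, and which thinning probability ($\ell_p(x)$ or $1-\ell_p(x)$) attaches to each Laplace factor, so that the exponents sum to exactly $\kappa_\text{L}+\kappa_\text{N}$ and $\xi_\text{L}+\xi_\text{N}$. A secondary point worth checking is integrability of the Laplace-functional integrands, so that the $\kappa$ and $\xi$ integrals are finite: near $x=0$ the bracketed factor is bounded by $1$, and at infinity the term $1-(1+\,\cdot\,)^{-N_\text{h}}$ decays quickly enough against $x\,\mathrm{d}x$ for the path-loss exponents of interest.
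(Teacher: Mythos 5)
Your proposal is correct and follows essentially the same route as the paper: the law-of-total-probability split over the LOS/NLOS state of the desired link, a single application of Lemma~1 to the Nakagami fading CDF followed by the binomial expansion, and the factorization of the interference Laplace transform over the six thinned sub-PPPs evaluated via the PPP Laplace functional and the gamma MGF. Your added observations (that the alternating sum is an identity after the one inequality, and that the $\kappa$ and $\xi$ integrals converge) are sound refinements of the same argument rather than a different approach.
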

 \begin{proof}
 	Substituting each Laplace transform (\ref{eq:SINRlaptrans}) into (\ref{eq:PCL}) for the conditional $P_\text{c}^\text{L}$ yields the first summation in Theorem 1. The same process is done for the Laplace transforms corresponding to $P_\text{c}^\text{N}$. These summations are then multiplied by $\bbP[\text{LOS}]$ and $\bbP[\text{NLOS}]$, respectively, to give the full CCDF of (\ref{eq:full_eq}).
 \end{proof}
 
 In Theorem 1, $\kappa_\text{L}$ and $\kappa_\text{N}$ correspond to the LOS and NLOS interference, respectively, when the desired signal is LOS while $\xi_\text{L}  $ and $\xi_\text{N}  $ correspond to the LOS and NLOS interference, respectively, when the desired signal is NLOS. While Theorem 1 may appear unwieldy, the decomposition of the terms illustrates the insight that can be gained from the Theorem. In the first summation, there are exponential terms that correspond to noise, LOS interference (i.e. $\kappa_\text{L}$), and NLOS interference (i.e. $\kappa_\text{N}$). Further, both $\kappa_\text{L}$ and $\kappa_\text{N}$ (and similarly $\xi_\text{L}  $ and $\xi_\text{N}  $) can be decomposed based on each antenna gain. It is possible to compare relative contributions to the total SINR CCDF. For example, by computing $\kappa_\text{N}$, we were able to see that $\kappa_\text{L} \gg \kappa_\text{N}$ for many different system parameters of interest. Therefore, $e^{-2\pi\lambda(\kappa_\text{L}+\kappa_\text{N})} \approx e^{-2\pi\lambda \kappa_\text{L}}$ which means NLOS interference has relatively no effect on the SINR distribution. We use this insight in Section \ref{sec:LOSint} to conclude that mmWave \emph{ad hoc} networks are \emph{LOS interference limited}.
 
 \subsection{Validation of the Model}
 
 Before proceeding, we verify the tightness of the bound in Theorem 1. Table \ref{params} shows the values used throughout the section.  
   \begin{table}
   \centering
       \begin{tabular}{ | l | l | l | l |}
       \hline
       \textbf{Parameter}& \textbf{Value} \\ \hline
       $\lambda$ & $5\times10^{-5}$, $5\times10^{-4}$ ($m^{-2}$)\\ \hline
       $r$ & 25, 50, 75 ($m$)\\ \hline
       $\beta$, $\alpha_{\text{LOS}}$, $\alpha_{\text{NLOS}}$& 0.008, 2, 4 \\ \hline
       $N_0$& -117 dB \\ \hline    
       $h_i$, $N_\text{h}$ &  Gamma, $7$\\ \hline
       $\theta$, $M$, $m$ & $\frac{\pi}{6}$, $10$, $0.1$  \\ \hline
       $P_\text{t}$ & 1W (30dBm) \\ \hline      
       \end{tabular}
       \caption{Parameters of results.}
               \label{params}
   \end{table}   
The parameters of (\ref{eq:full_eq}) are simulated through Monte Carlo, while Theorem 1 is used for the analytical model. For the simulation, a PPP was generated over an area of $4 \text{km}^2$. The thermal noise power of 500MHz bandwidth at room temperature is $-117$dB. We used $N_\text{h}=3$ when computing the analytical expressions. We chose $N_\text{h} = 3$ because measurement campaigns have shown that small-scale fading is more deterministic at mmWave \cite{Rappaport2015}.  In the measurements of \cite{Samimi2013,RapSun13}, small-scale fading is not very significant. Because of the directional antennas and sparse channel characteristics, the uniform scattering assumption for Rayleigh fading is not valid at mmWave frequencies. We chose a $30^\circ$ beamwidth. Additionally, 10dB gain corresponds to the theoretical gain of a 10 element uniform linear array unit gain antennas.
  
Fig. \ref{fig:sub_lam_1} compares the analytical SINR distribution with the empirical given a $\lambda = 5\times 10^{-5} \text{m}^{-2}$ or an average of $50$ $\text{users}/\text{km}^2$. This can be attributed to the directional antennas limiting the interference seen by the typical node. The analytical expression in Theorem 1 of the mmWave \emph{ad hoc} network matches extremely well to the simulations. For all three link lengths, the SINR of the users is greater than $0$dB a majority of the time. 
 
 \begin{figure}
   \centering
   \begin{subfigure}{.5\textwidth}
   \centering
   \includegraphics[width=.65\linewidth]{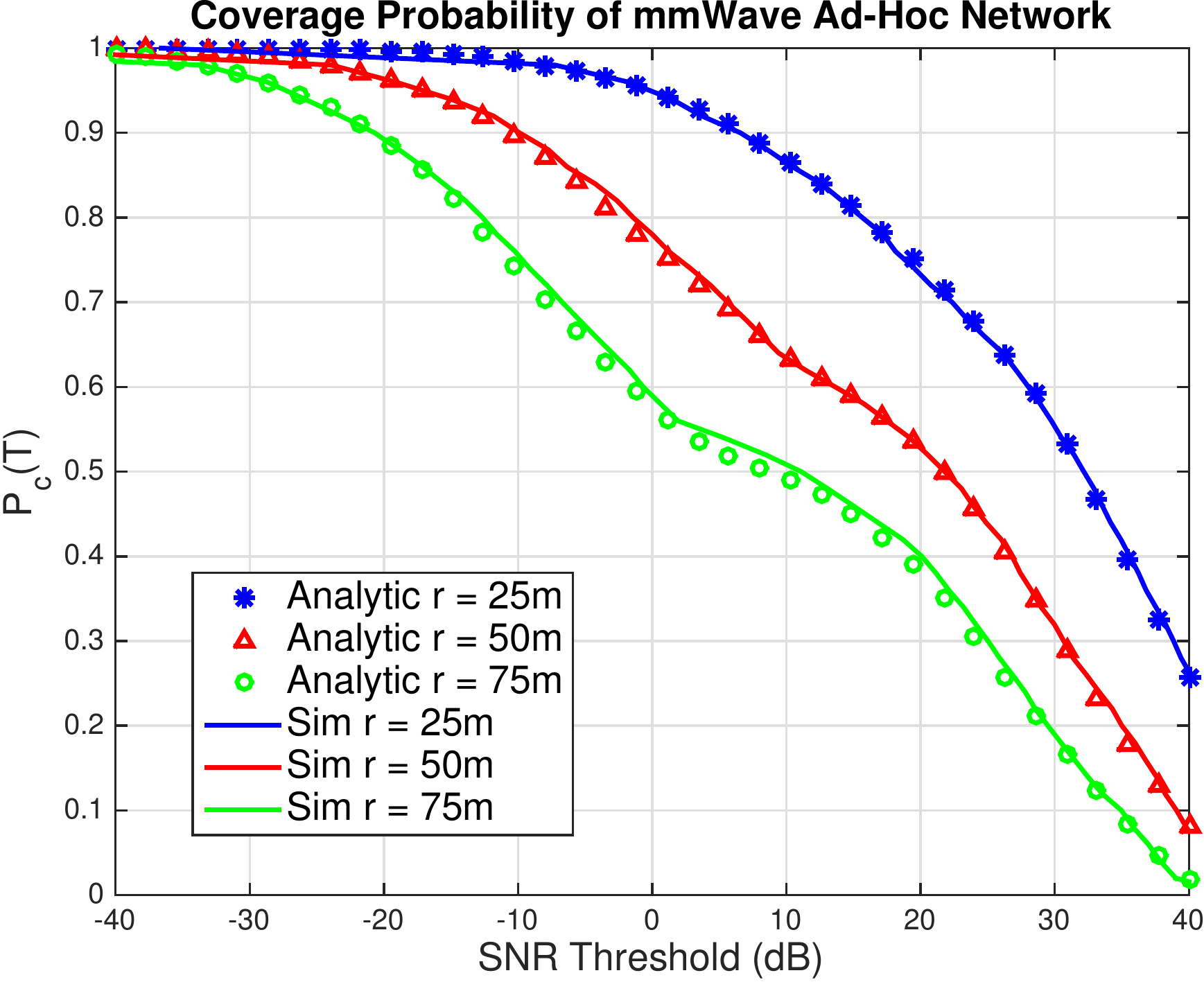}
  \caption{}
   \label{fig:sub_lam_1}
   \end{subfigure}
   \begin{subfigure}{.5\textwidth}
   \centering
   \includegraphics[width=.65\linewidth]{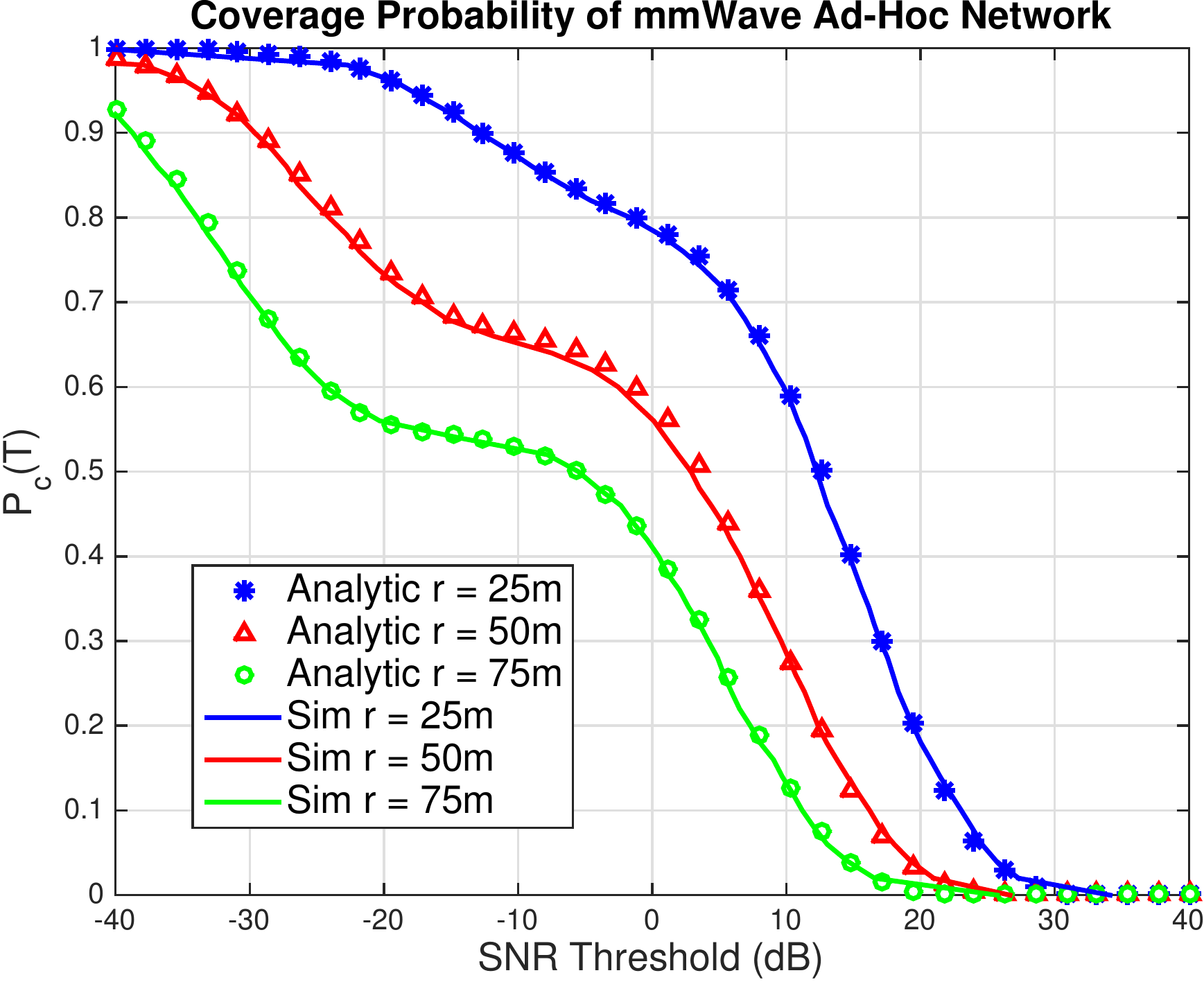}
   \caption{}
   \label{fig:sub_lam_2}
   \end{subfigure}
\caption{The SINR distribution of mmWave \emph{ad hoc} networks with $\lambda = 5\times10^{-5}$ (a) and $\lambda = 5\times10^{-4}$ (b). }
  	\label{fig:overall_coverage}
   \end{figure}

 Fig. \ref{fig:sub_lam_2} compares the SINR distribution results for a much denser network, $\lambda = 5\times 10^{-4}\text{m}^{-2}$ which corresponds to an average of $500$ $\text{users}/\text{km}^2$. Again, Theorem 1 matches the simulation well. For the larger link distances, we see bi-modal behavior of the CCDF with the plateaus around $-10$dB.
    
  \subsection{LOS Protocol-Gain}
In this section, we define and discuss the \emph{LOS protocol-gain}. We can view $P_\text{c}(T)$ as a mixture of $P_\text{c}^\text{L}(T)$ and $P_\text{c}^\text{N}(T)$. In Fig. \ref{fig:sub_lam_2}, the interference causes most of the density of $P_\text{c}^\text{N}(T)$ to shift to very low SINR. The plateaus in the CCDF of Fig. \ref{fig:sub_lam_2} illustrate this separation. Unless the SINR threshold is very low (e.g. below -20dB), these links will not be able to communicate without LOS communication. This motivates the need for a protocol to ensure LOS communication (e.g. using a LOS relay to multi-hop around a building). If LOS communication is assumed, the SINR distribution in the LOS regime will be equal to $P_\text{c}^\text{L}(T)$ (i.e. set $\bbP[\text{LOS}] = 1$). With many users nearby, the network will have multiple users that could potentially be a LOS receiver.

 \begin{figure}
   \centering
   \begin{subfigure}{.5\textwidth}
   \centering
   \includegraphics[width=.65\linewidth]{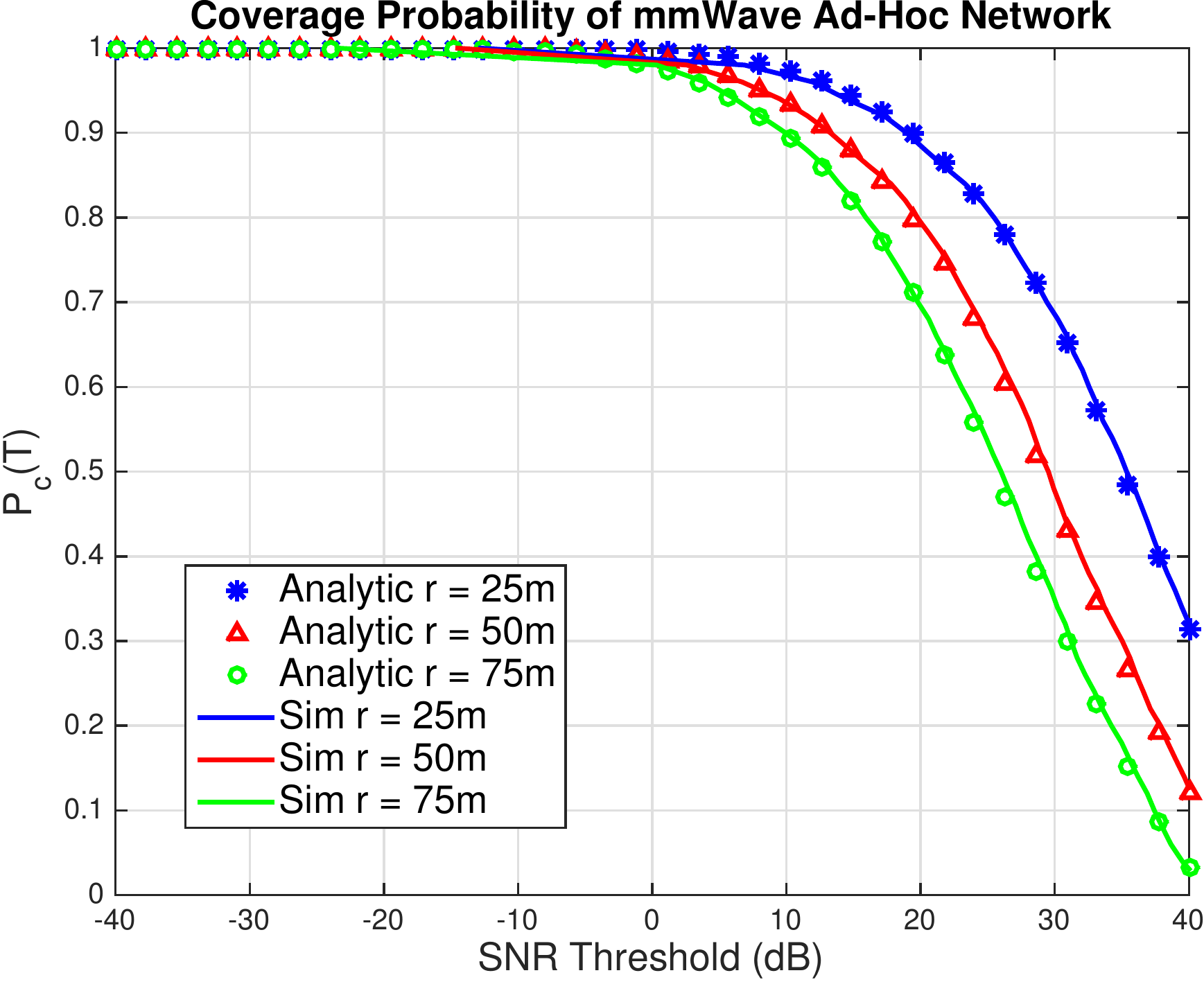}
  \caption{}
   \label{fig:sub_lam1_los}
   \end{subfigure}
   \begin{subfigure}{.5\textwidth}
   \centering
   \includegraphics[width=.65\linewidth]{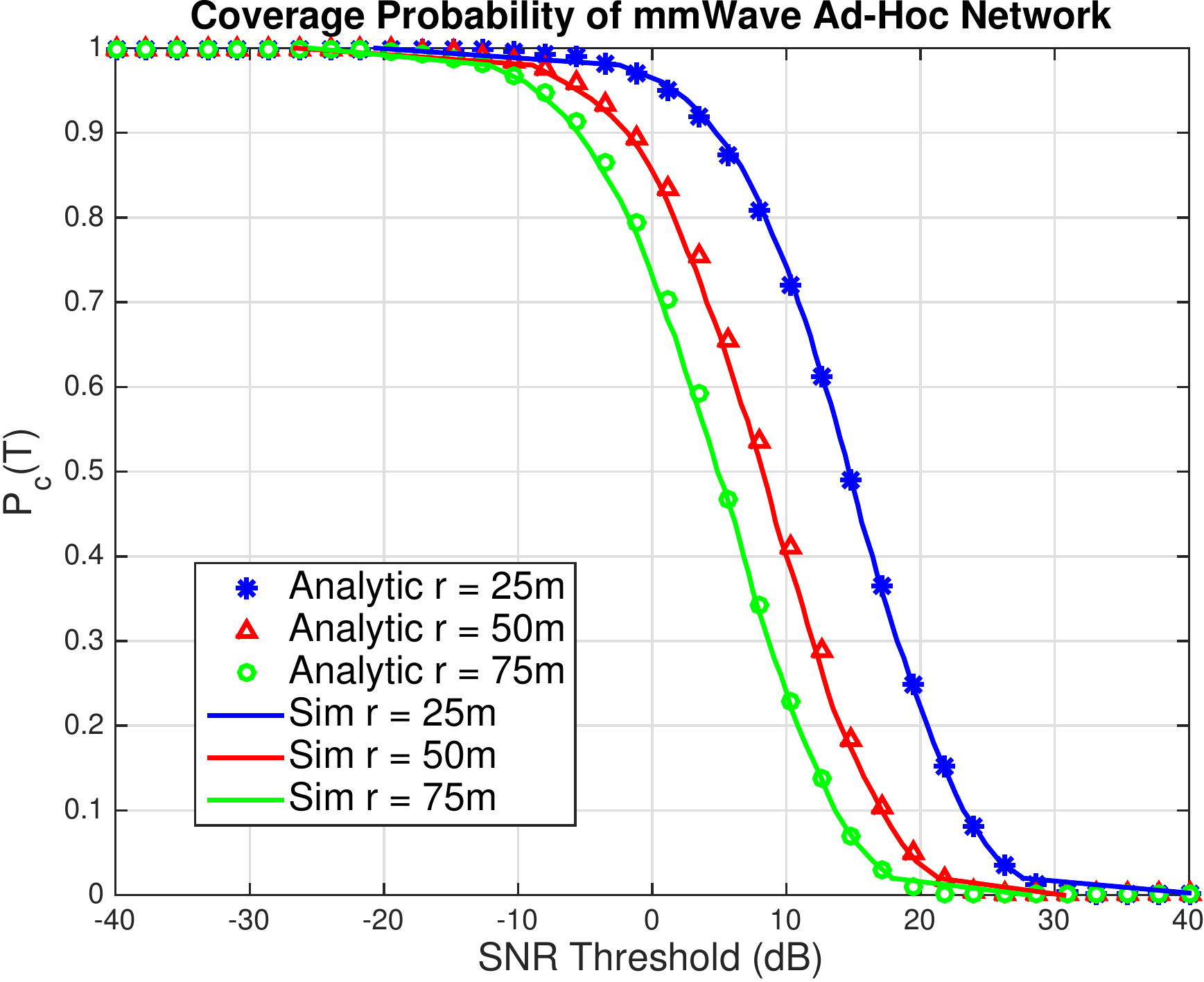}
   \caption{}
   \label{fig:sub_lam2_los}
   \end{subfigure}
	\caption{The SINR distribution of mmWave \emph{ad hoc} networks with $\lambda = 5\times10^{-5}$ (a) and $\lambda = 5\times10^{-4}$ (b). If the desired link is LOS, significant improvement to the SINR distribution is realized. We term this the \emph{LOS protocol-gain}.}
 	\label{fig:los_coverage}
   \end{figure} 
   
 Fig. \ref{fig:los_coverage} shows the SINR distribution of a mmWave \emph{ad hoc} network if the desired link is LOS. The improvement is quite large. The 90\% coverage point in Fig. \ref{fig:sub_lam1_los} is improved by 10dB for 25m, 20dB for 50m, and 30dB for 75m, compared to the same network in \ref{fig:sub_lam_1}. The improvement in Fig. \ref{fig:sub_lam2_los} is even more drastic. For the 25m link, 20dB improvement is seen. This knowledge should influence MAC design, which is why we call it \emph{protocol-gain}.

 
\subsection{Distributions of $r$}
 
One of the limitations of the dipole model is the fixed length of the communication link. This model is used for its analytical tractability but is not a realistic expectation. In a D2D gaming scenario, for example, the distance between the receiver and transmitter will vary as the two users walk around. To quantify this, we can integrate Theorem 1 against a receiver location density function. The SINR distribution accounting for different receiver geometries is
\begin{equation}
P_\text{c}^r(T) = \int_{S} P_\text{c}(r,T) f_R(r) dr
\end{equation}
 where $S$ is the support of the location density distribution and $f_R$ is the density and $P_\text{c}(r,T)$ is Theorem 1, but we allow varying receiver distances. We compare two different distributions against the fixed dipole assumption. 
    
     \begin{figure}
     	\centering
     	\includegraphics[width=0.65\columnwidth]{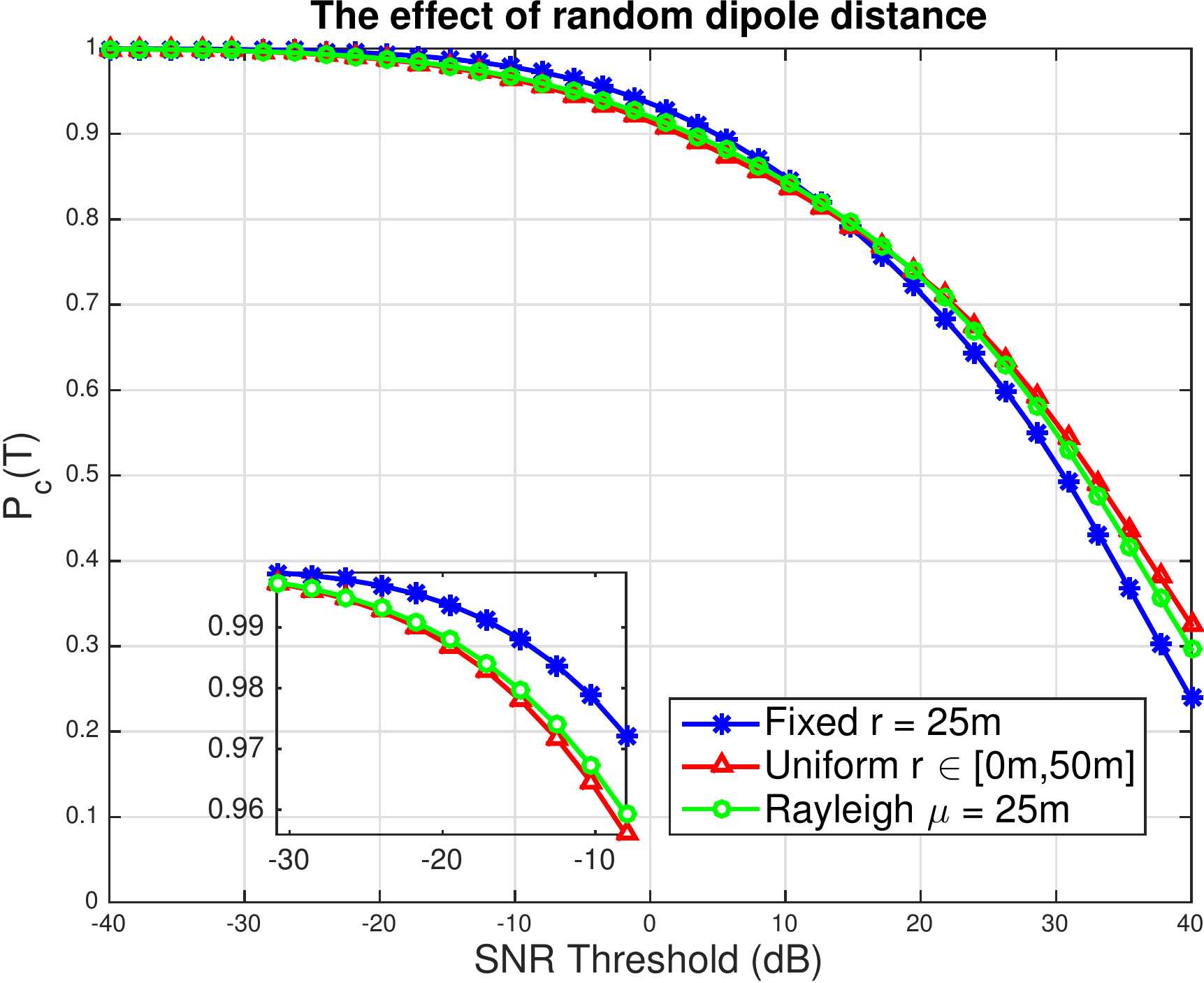}
     	\caption{The effect of receiver distribution is quantified for the overall (LOS/NLOS) SINR distribution (a) and LOS-only SINR distribution (b). Each link, on average, is 25m.}
     	\label{fig:var_r}
     \end{figure}
 
  As shown in \figref{fig:var_r}, we use two receiver geometries to compare against, the uniform and Rayleigh \cite{Lin2014}. For larger SINR thresholds, including a random receiver distance improves performance. This is due to the positive effect of having the receiver closer some of the time. As shown in \figref{fig:overall_coverage}, communication when NLOS generally has poor SINR. The random shorter link means LOS communication is more likely. Conversely, the random receiver locations hurt performance for lower SINR thresholds. If assuming random receiver locations, both distributions give similar results despite the Rayleigh distribution having unbounded support. Surprisingly, the results indicate that simply knowing the mean of the distribution captures much of the SINR distribution. 
 
 \subsection{LOS Interference Limited Networks}\label{sec:LOSint}
 Interference is a key design limitation for \emph{ad hoc} networks. Cellular network analysis has shown that mmWave cellular networks can be modeled as noise-limited with inter-site-distances of 200m \cite{Singh2015,Kulkarni2014,Bai2014,RapSun13}. This network topology, however, is different from an \emph{ad hoc} network as cellular users associate with a fixed base station. We now characterize the transition from noise-limited to interference-limited operation as a function of user density, building density, antenna pattern, and transmission distance. We achieve this by using the interference-to-noise ratio (INR) cumulative distribution function (CDF) 
 \begin{equation}
 P^\text{NL}(T) = \bbP[\mathrm{INR} \leq T ]. \label{eq:inr}
 \end{equation}
 We leave the threshold value up to system designers to determine what value of $T$ is appropriate for defining noise limited. A natural choice may be 1 (0dB) or 10 (10dB). The INR CDF can be written as
 \begin{eqnarray}\allowdisplaybreaks
 P^\text{NL}(T) &=& \bbP\bigg[\frac{\sum_{\text{i} \in \Phi} P_\text{t} M_\text{i} h_\text{i} A r_\text{i}^{-\alpha_\text{i}}}{N_\text{0}} \leq T\bigg] \\
 &=& \bbP\bigg[1 \geq \frac{\sum_{\text{i} \in \Phi} P_\text{t} M_\text{i} h_\text{i} A r_\text{i}^{-\alpha_\text{i}}}{T N_\text{0}} \bigg] \\
 &=& \bbP\bigg[1 \geq \frac{I_\Phi}{T N_\text{0}}\bigg]\\
 &=& 1 - \bbP\bigg[1 < \frac{I_\Phi}{T N_\text{0}}\bigg].\label{eq:prob}
 \end{eqnarray}
To analytically evaluate $\bbP\left[1 < \frac{I_\Phi}{T N_\text{0}}\right]$, we replace $1$ with a random variable, $C$, with low variance. We let $C\sim\Gamma(N_\text{C},1/N_\text{C})$. If we examine the probability density function (PDF) of $C$,
\begin{equation}
f_C(x) = \frac{N_\text{C}^{N_\text{C}} x^{N_\text{C}-1}e^{-N_\text{C}x}}{\Gamma(N_\text{C})},
\end{equation}
the $\lim\limits_{N_\text{C}\rightarrow \infty} f_C(x) = \delta(x-1)$. Further, we leverage Lemma 1 again. The INR distribution can then be bounded as
 \begin{align}
 P^{\text{NL}} &= 1 - \bbP\bigg[C < \frac{I_\Phi}{T N_\text{0}}\bigg]\\
 &< 1 -\bbE_\Phi \bigg[\big(1-e^{-a \frac{I_\Phi}{T N_0}}\big)^{N_\text{C}}\bigg]\label{eq:pdfineq}\\
 &= \sum_{n = 1}^{N_\text{C}} {N_\text{C} \choose n} (-1)^{n+1} \bbE_\Phi \big[e^{-a n \frac{I_\Phi}{T N_0}}\big] \label{eq:binom},
 \end{align}
 where (\ref{eq:pdfineq}) is from the law of total probability and gamma CDF approximation while (\ref{eq:binom}) is from the Binomial Theorem. The transmitters, again, are six independent PPPs as explained in (\ref{eq:subPPP}). Because each sub-process is independent, we re-write (\ref{eq:binom}) as a product of expectations. The analytic expression of the first Laplace expectation term is

 \begin{equation}
 \bbE\bigg[e^{-\frac{a n}{N_0 T}I_{\Phi_{\text{LOS}}}^\text{GG}}\bigg] = e^{-2\pi \lambda p_\text{GG} \int_{0}^{\infty}\left(1 - \bbE_h\big[e^{-\frac{a n P_\text{t} A GG h}{x^\alpha N_0 T}}\big]\right)\ell_p(x) x \text{d}x }.\label{eq:singleLap}
 \end{equation}
  We invoke the MGF of a gamma random variable to yield the final Laplace transform of the PPP as
 \begin{equation}
 \cL_{I_{\Phi_{\text{LOS}}}^{\text{GG}}} =  e^{-2\pi \lambda  p_\text{GG}\int_{0}^{\infty}\big(1 -1 / (1+\frac{a n P_\text{t} A GG}{x^\alpha N_0 T N_\text{h}})^{N_\text{h}}\big) \ell_p(x) x \text{d}x }.\label{eq:laptrans}
 \end{equation}
 Each other Laplace transform is computed similarly but $ p_\text{GG}$ will correspond to the probability of the antenna gain $\{GG,Gg,gg\}$ and the NLOS probability is  $1-p_{\text{LOS}}$. We summarize our results in the following theorem.
 \begin{theorem}
 	The INR distribution of a mmWave \emph{ad hoc} network can be tightly bounded by
 	\begin{equation}
 	P^\mathrm{NL}(T) <  \sum_{n = 1}^{N_\text{C}} {N_\text{C} \choose n} (-1)^{n+1} e^{-2\pi \lambda (\zeta_\text{L} + \zeta_\text{N})} \label{eq:inrcdf}
 	\end{equation}
 	where
 	\begin{align}
 	\zeta_\text{L} &= \sum_{i} p_i\int_{0}^{\infty}\bigg(1 -1 / \big(1+\frac{a n P_\text{t} A M_i}{x^\alpha_\text{L} N_0 T N_\text{h}}\big)^{N_\text{h}}\bigg)  \ell_p(x) x \text{d}x \label{eq:R}\\
 	\zeta_\text{N} &= \sum_{i} p_i\int_{0}^{\infty}\bigg(1 -1 / \big(1+\frac{a n P_\text{t} A M_i}{x^\alpha_\text{N} N_0 T N_\text{h}}\big)^{N_\text{h}}\bigg) \big(1- \ell_p(x)\big) x \text{d}x \label{eq:U}
 	\end{align}
 	with $i\in\{\text{GG,Gg,gg}\}$.
 \end{theorem}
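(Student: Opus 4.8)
The plan is to follow the same template used for Theorem~1, adapting it to the ratio $\mathrm{INR} = I_\Phi / N_0$. Starting from the definition \eqref{eq:inr}, I would first rearrange the event $\{\mathrm{INR}\le T\}$ into its complement as in \eqref{eq:prob}, so that the quantity to evaluate is $1 - \bbP[\,1 < I_\Phi/(T N_0)\,]$. The difficulty at this point is that the threshold is the deterministic constant $1$, which does not expose a Laplace-transform structure the way a fading-power threshold does in the SINR analysis. The key step is therefore to smooth the threshold: replace the constant $1$ by an auxiliary gamma variable $C\sim\Gamma(N_\text{C},1/N_\text{C})$, whose density concentrates at $1$ as $N_\text{C}\to\infty$, reducing the target to $1 - \bbP[\,C < I_\Phi/(T N_0)\,]$ up to an approximation that becomes exact in that limit.

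With the threshold made random, I would invoke Lemma~1 to lower bound the gamma CDF $\bbP[C<z]$ by $(1-e^{-az})^{N_\text{C}}$; since this term is subtracted, the bound flips to give the upper bound \eqref{eq:pdfineq} on $P^{\text{NL}}$. Expanding the $N_\text{C}$-th power by the Binomial Theorem then turns the expression into the finite alternating sum \eqref{eq:binom} of exponential moments $\bbE_\Phi[e^{-an I_\Phi/(N_0 T)}]$, i.e. evaluations of the Laplace functional of the interference at the discrete arguments $s_n = an/(N_0 T)$.

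Next I would exploit independence. Because blockage correlations are neglected and the antenna-gain marks are i.i.d., the marked interferer process splits into the six independent sub-PPPs of \eqref{eq:subPPP}, so each exponential moment factors into a product of six Laplace transforms. For each factor I would apply the probability generating functional (Laplace functional) of a PPP, writing $\cL = \exp(-2\pi\lambda p_i\int_0^\infty (1-\bbE_h[e^{-s_n P_\text{t} A M_i h\, x^{-\alpha}}])\,\ell_p(x)\,x\,\mathrm{d}x)$ as in \eqref{eq:singleLap}, where the thinning probability $p_i$ and the factor $\ell_p(x)$ or $1-\ell_p(x)$ select the sub-process and the exponent $\alpha\in\{\alpha_\text{L},\alpha_\text{N}\}$ is fixed accordingly. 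The inner expectation over the gamma fading $h$ is just its MGF, $(1+s_n P_\text{t} A M_i x^{-\alpha}/N_\text{h})^{-N_\text{h}}$, which yields the closed radial integral \eqref{eq:laptrans}. Summing the three LOS integrands (over $i\in\{\text{GG},\text{Gg},\text{gg}\}$) into $\zeta_\text{L}$ and the three NLOS integrands into $\zeta_\text{N}$, and collecting the product of exponentials into $e^{-2\pi\lambda(\zeta_\text{L}+\zeta_\text{N})}$, reproduces \eqref{eq:inrcdf}.

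The main obstacle is the threshold-smoothing step: everything downstream is routine stochastic geometry, but the legitimacy of replacing $1$ by $C$ and the direction of the resulting inequality deserve care. One must check that $1 - \bbP[C<z]$ upper bounds the true $1 - \bbP[1<z]$ for finite $N_\text{C}$ (or argue via the $N_\text{C}\to\infty$ limit together with the Lemma~1 bound), so that the final expression is a genuine bound rather than only an approximation. The secondary, purely bookkeeping obstacle is tracking which $p_i$, which $\ell_p$ factor, and which path-loss exponent attaches to each of the six sub-PPPs, so that the grouping into $\zeta_\text{L}$ and $\zeta_\text{N}$ is carried out consistently.
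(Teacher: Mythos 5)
Your proposal follows essentially the same route as the paper: rewrite the INR CDF as $1-\bbP[1<I_\Phi/(TN_0)]$, smooth the unit threshold with $C\sim\Gamma(N_\text{C},1/N_\text{C})$, apply Lemma~1 and the Binomial Theorem, then factor the Laplace functional over the six thinned sub-PPPs using the gamma MGF, exactly as in \eqref{eq:prob}--\eqref{eq:laptrans}. The one caveat you raise --- that the replacement of the constant $1$ by $C$ is an approximation whose inequality direction is only justified in the $N_\text{C}\to\infty$ limit --- is a genuine subtlety that the paper itself leaves implicit, so your attention to it is warranted rather than a deviation.
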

 
 \begin{proof}
 	Substituting the Laplace transform (\ref{eq:laptrans}) into (\ref{eq:binom}) yields the result. 
 \end{proof}
 
While we focus on investigating the impact of the node density and beamwidth of directional beamforming in this paper, the INR distribution also depends on other system parameters, such as transmission power. It should be noted that the INR in \eqref{eq:inr} scales with the transmit power; interesting future work is discovering a transmission power control scheme to optimize the INR. Such a scheme could limit the transmit power based on the proximity of the nearest interferer.
 
 \subsection{One-Way Performance Analysis}
  Now, using Theorem 1, we characterize the transmission capacity, $\lambda_\epsilon$. This is the largest $\lambda$ the network can support given an SINR threshold, $T$ and outage $\epsilon$. More simply, $1-\epsilon = P_\text{c}(T)$ of users will have an SINR larger than $T$. The transmission capacity can also be defined as the number of successful transmissions per unit area, which is directly connected to the number of users supported by the network. To do this, we approximate the exponential terms of Theorem 1 as
   \begin{equation}\label{eq:onewaytx}
   P_\text{c}^{\text{L}} < \sum_{n=1}^{N_\text{h}} (-1)^{n+1}{N_\text{h} \choose n}  e^{-nK_\text{L}TN_0}\bigg(1-2\pi\lambda_\epsilon \Theta+2\pi\lambda_\epsilon^2\Theta^2\bigg)
   \end{equation} 
  where $\Theta = \kappa_\text{L} + \kappa_\text{N} $. We leverage the bound, $e^{-x} \leq (1-x+x^2/2)$ for $x \in\bbR^+$, for the Laplace functional term. This bound is tight for small $x$. We are interested in analyzing the optimal $\lambda$ for $P_\text{c}$ near $1$. As a result, the Laplace functional will be close to 1; the argument will be close to 0. A similar bound is done for the NLOS term in Theorem 1. We combine (\ref{eq:onewaytx}) and the NLOS approximation to form
  \begin{align}
  \begin{split}\label{eq:onewaytxFULL}
  1-\epsilon&< \sum_{n=1}^{N_\text{h}} (-1)^{n+1}{N_\text{h} \choose n}  e^{-nK_\text{L}TN_0}\times\\
  &\bigg(1-2\pi\lambda_\epsilon \Theta+2\pi\lambda_\epsilon^2\Theta^2\bigg) + \sum_{n=1}^{N_\text{h}} (-1)^{n+1}{N_\text{h} \choose n}\times\\
  &e^{-nK_\text{N}TN_0}\bigg(1-2\pi\lambda_\epsilon \Psi+2\pi\lambda_\epsilon^2\Psi^2\bigg)
  \end{split} 
  \end{align}
with $\Psi = \xi_\text{L} +\xi_\text{N}  $. Because of this bound, $P_\text{c}$ is now a quadratic equation in $\lambda$ which can be solved in closed-form. The exact solution depends on $N_\text{h}$. Symbolic tools, such as \emph{Mathematica}, can factor and solve (\ref{eq:onewaytxFULL}) such that
\begin{equation}
\lambda_\epsilon = f(T,\epsilon).\label{eq:txcap}
\end{equation}
  
Area spectral efficiency is a useful metric because it can characterize the network performance, rather than just a single link, as SINR does \cite{AndGan10}. We define area spectral efficiency as
  \begin{equation}
  \text{ASE} := \underbrace{\lambda_\epsilon}_{\frac{\text{users}}{\text{area}}}\underbrace{\text{log}_2(1+T)}_\text{efficiency} \underbrace{(1-\epsilon)}_\text{\% of the time}.\label{ASE}
  \end{equation}
  Substituting \eqref{eq:txcap} into (\ref{ASE}) yields a function of just $T$ and $\epsilon$. The ASE yields a result in terms of bits/sec/Hz/m$^2$.

\section{Two-way Ad Hoc Communication}\label{sec:twoway}

The derivations from the Section \ref{sec:oneway} are for \emph{one-way} communication. There is no consideration for the reverse link (i.e. receiver to transmitter). In real systems, however, successful transmission usually relies on a two-way communication stack. The two-way transmission capacity quantifies the maximum density of users a network can support while \emph{both} the forward and reverse link are subject to outage constraint, $\epsilon$ \cite{VazTru11}.

The \emph{forward} link is defined as the transmitter to receiver link (i.e. what was discussed in Section \ref{sec:oneway}), while the \emph{reverse} link is the receiver to transmitter control link. Frequency division duplexing (FDD) is used between the forward and reverse links, as is done in \cite{VazTru11}. Each link operates concurrently with differing rate requirements. Consider the bandwidth from Section \ref{sec:oneway} split among the forward and reverse links. Hence, $B_{\text{total}}$ is the bandwidth available to the system. The forward link is allocated $B_{\text{F}}$, while the reverse link is allocated $B_{\text{R}} = B_{\text{total}}- B_{\text{F}}$. The SINR is similarly defined as $\text{SINR}_\text{F}$ and $\text{SINR}_\text{R}$. Correspondingly, from Shannon's equation, the links achieve rates, $R_\text{F}$ and $R_\text{R}$. A user with rate requirement $R_\text{F}$ would then have an SINR threshold of $T_\text{F} = 2^{R_\text{F}/B_\text{F}}-1$. It should be noted that time division duplexing can similarly be used with the threshold of $T_\text{F} = 2^{\frac{R_\text{F}}{\tau_\text{F}B_\text{total}}}-1$ with $\tau_\text{F}$ being the fraction of time for the forward link. The reverse link thresholds are similarly defined. We consider only FDD for the remainder of the analysis.

\subsection{Two-way SINR Analysis}
The two-way SINR probability is the probability that the forward link \emph{and} reverse link exceed an SINR threshold. More precisely,

\begin{equation}
P_\text{c}^\text{tw} = \bbP[\mathrm{SINR_F} > T_\text{F}, \mathrm{SINR_R} > T_\text{R}].
\end{equation}

We assume that the forward and reverse link do not have the same SINR threshold because the reverse control link is generally low-rate compared to the forward link. To analyze this probability, we leverage the following definitions and lemma.

\emph{Definition 1}\cite{VazTru11}: A random variable $X$ defined on $(\Omega,\cF,\bbP)$ is increasing if $X(\omega) \leq X(\omega')$ for a partial ordering on $\omega, \omega'$. $X$ is decreasing if $-X$ is increasing.

The SINR is a random variable defined on the probability space which is determined by how the interferers are placed on the plane. Let $\omega$ be a set of active interferers from the PPP. Then, $\omega' \geq \omega$ if $\omega'$ is a superset of $\omega$. The SINR \eqref{eq:SINR} decreases if another interferer is added: $\text{SINR}(\omega) \geq \text{SINR}(\omega')$. Therefore, SINR is a decreasing random variable. 

\emph{Definition 2}\cite{VazTru11}: An event $A$ from $\cF$ is increasing if $\bbI_A(\omega) \leq \bbI_A(\omega')$ when $\omega \leq \omega'$ where $\bbI_A$ is the indicator function. The event is decreasing if $A^c$ is increasing.

The SINR probability event, $\{\text{SINR}>T\}$ is a decreasing event. If another interfering user is added to $\omega$, the probability of successful transmission decreases. Now, we can leverage the Fortuin, Kastelyn, Ginibre (FKG) inequality \cite{fkg}.

\emph{Lemma 2}\cite{fkg}: If both $A, B \in \cF$ are increasing or decreasing events then $P(AB)\geq P(A)P(B)$.

The FKG inequality can give a bound on the two-way SINR probability. The bound is only tight when the forward and reverse channels are independent; the dependence, however, can be low in \emph{ad hoc} network as shown in \cite{VazTru11,Bai2014}. In \cite{VazTru11}, this was shown to be a tight lower bound. Using FKG, we can define the two-way SINR probability as

\begin{equation}
P_\text{c}^\text{tw} \geq \bbP[\mathrm{SINR_F} > T_\text{F}]\bbP[\mathrm{SINR_R} > T_\text{R}].
\end{equation}
Therefore, the two-way SINR probability can be lower-bounded as
\begin{align}\label{twowayPc}
\begin{split}
P_\text{c}^\text{tw} \geq& \Bigg[\sum_{n=1}^{N_\text{h}} (-1)^{n+1}{N_\text{h} \choose n} e^{-2\pi \lambda \big[\kappa_\text{L}(T_\text{F})+\kappa_\text{N}(T_\text{F})\big]}\Bigg] \\
&\times\Bigg[\sum_{n=1}^{N_\text{h}} (-1)^{n+1}{N_\text{h} \choose n} e^{-2\pi \lambda  \big[\kappa_\text{L}(T_\text{R})+\kappa_\text{N}(T_\text{R})\big]}\Bigg].
\end{split}
\end{align}

\subsection{Two-Way Performance Analysis}
Now we compute the two-way transmission capacity, $\lambda_\epsilon^{\text{tw}}$. Because of the constraint that both transmitter and receiver must succeed in transmission, we can argue $\lambda_\epsilon^{\text{tw}} \leq \lambda_\epsilon$. It is unclear, however, if the gap is large in a mmWave network. Using the transmission capacity framework can quantify how many users must be removed from the network to support the reverse link. Using a similar approach as with the \emph{one-way} transmission capacity, we use a Taylor expansion of the exponential function to yield
\begingroup\makeatletter\def\f@size{8}\check@mathfonts
\def\maketag@@@#1{\hbox{\m@th\normalsize\normalfont#1}}%
\begin{align}\label{twowaytx}
\begin{split}
P_\text{c}^\text{tw} \approx& \Bigg[\sum_{n=1}^{N_\text{h}} (-1)^{n+1}{N_\text{h} \choose n} \bigg(1-2\pi\lambda_\epsilon^{\text{tw}} \Theta(T_\text{F})+2\pi(\lambda_\epsilon^{\text{tw}})^2\Theta^2(T_\text{F})\bigg)\Bigg] \times \\
&\Bigg[\sum_{n=1}^{N_\text{h}} (-1)^{n+1}{N_\text{h} \choose n} \bigg(1-2\pi\lambda_\epsilon^{\text{tw}} \Theta(T_\text{R})+2\pi(\lambda_\epsilon^{\text{tw}})^2\Theta^2(T_\text{R})\bigg)\Bigg].
\end{split}
\end{align}\endgroup
The result is a quartic equation in $\lambda_\epsilon^{\text{tw}}$ which has an analytic expression. The general solution, however, is quite messy, and the equation is a page long, so it is omitted here. An analytical solver, such as \emph{Mathematica}, can factor the coefficients of \eqref{twowaytx} which can be input into a polynomial root solver to yield the solution. The two-way area spectral efficiency can be defined as \cite{VazHea12}
\begin{equation}\label{eq:asetw}
\text{ASE}_\epsilon^\text{tw} := \lambda_\epsilon \bigg( \frac{R_\text{F} + R_\text{R}}{B_\text{total}}\bigg)(1-\epsilon).
\end{equation}
Given rate requirements $R_\text{F}$ and $R_\text{R}$, what is the allocation of bandwidth that maximizes \eqref{eq:asetw}? We explore this trade-off in Section \ref{sec:results}. 

\section{Performance Results}\label{sec:results}
In this section, we evaluate the performance metrics to obtain the transmission capacity, $\lambda_\epsilon$. Further, we compute the area spectral efficiency to define the \emph{best} $\lambda$, given by $\lambda^\star$. We compare the achievable rates for mmWave networks with classic results for lower frequency \emph{ad hoc} networks. The section is concluded with an investigation into two-way communication. 

Throughout the section, we compare the mmWave results to UHF \emph{ad hoc} networks (e.g. 2.4 GHz). For the UHF network, we adjust the model parameters to fit UHF networks. We maintain a constant antenna \emph{aperature} between models which keeps the relative physical size of the devices constant. For an antenna, the gain is computed using $G = \frac{A_\text{eff}}{\lambda^2 / 4\pi}$ where $A_\text{eff}$ is the aperature of the antenna. By increasing the frequency ten-fold (e.g. 2.4GHz to 28GHz), the gain of the resulting mmWave antenna is 100 (20dB); this matches our 20dB total gain for both transmitter and receiver (i.e. 10dB for each transmitter and receiver). We maintain 1W (0dB) of transmit power for UHF. To capture the effect of LOS/NLOS communication, we use $\alpha_\text{L} = 2.09$ and $\alpha_\text{N} = 3.75$ as shown in \cite{GaliottoGMD14} which are taken from 3GPP LTE measurements. We use the same blockage model as mmWave. We use a path-loss intercept of $40.4$dB and a noise power of $-127$dB (e.g noise power for 50MHz). For the rate calculations, we use a bandwidth of 50MHz.

\subsection{Transmission Capacity}  
  \begin{figure}
  \centering
  \begin{subfigure}{.5\textwidth}
  \centering
  \includegraphics[width=.65\linewidth]{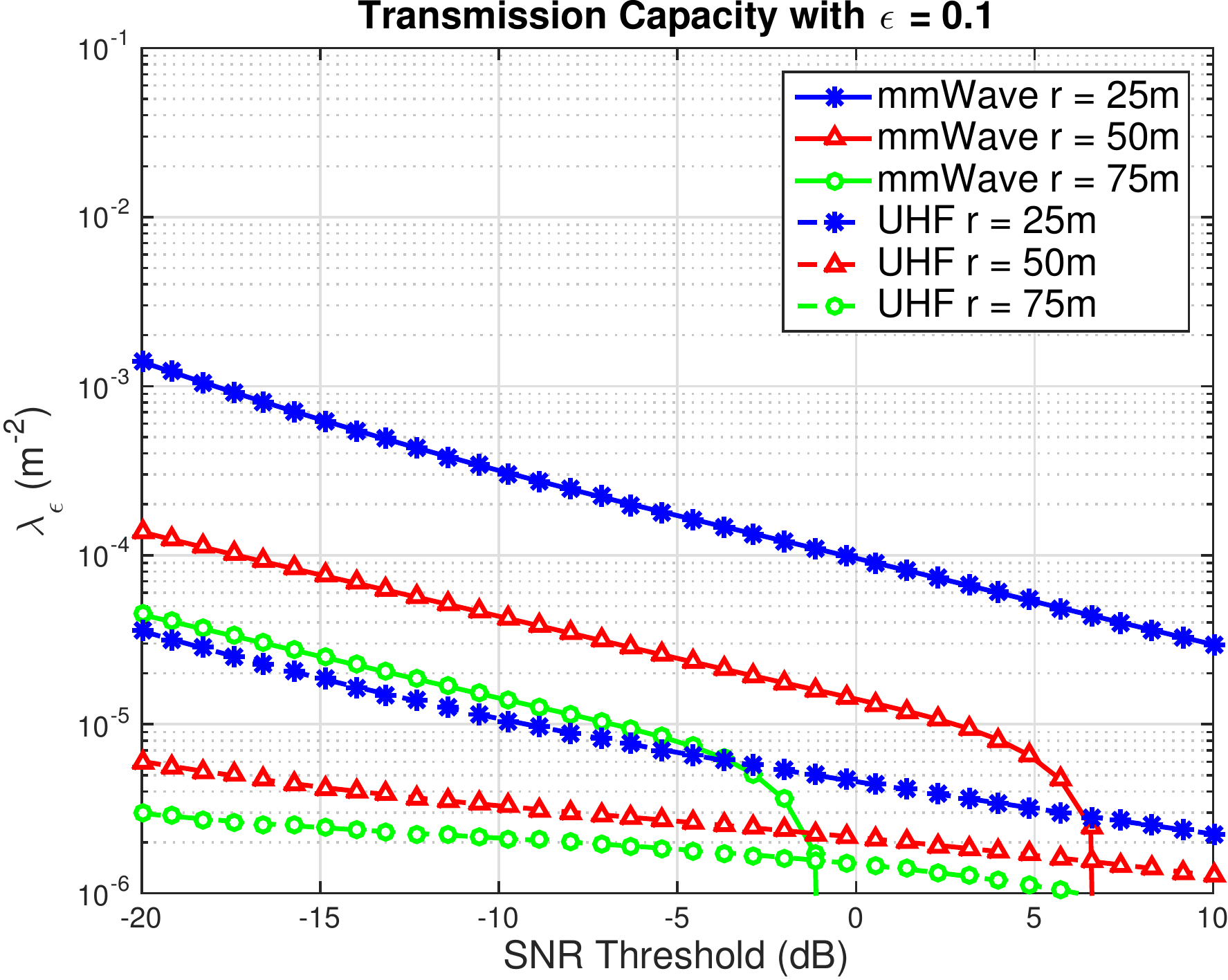}
 \caption{}
  \label{fig:txcap_sub1}
  \end{subfigure}
  \begin{subfigure}{.5\textwidth}
  \centering
  \includegraphics[width=.65\linewidth]{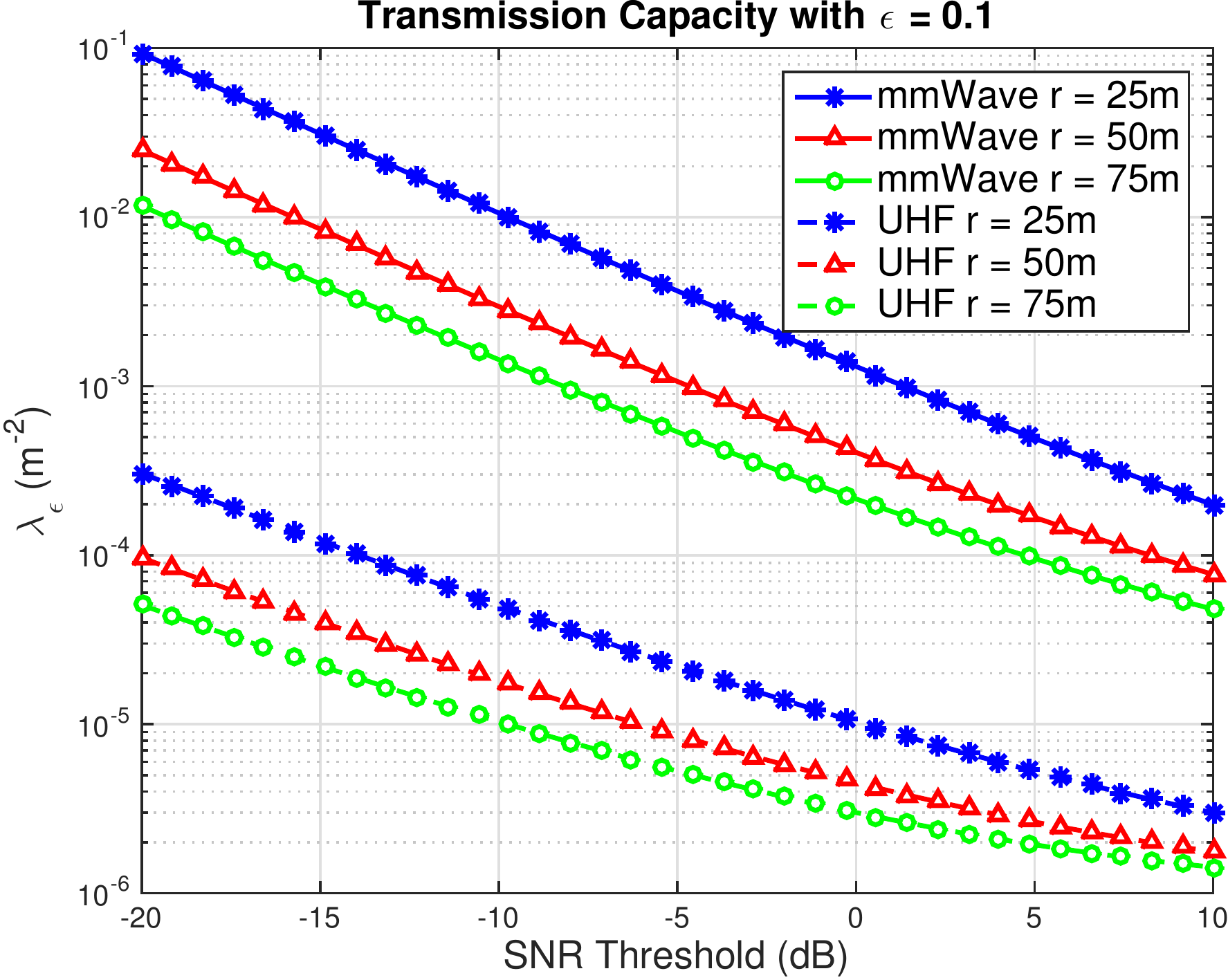}
  \caption{}
  \label{fig:txcap_sub2}
  \end{subfigure}
  	\caption{The largest $\lambda$ for a 10\% outage at various SINR thresholds and dipole distances for NLOS/LOS communication (a) and LOS-only communication (b). Note the different y-axis scales.}
   	\label{fig:epslam1}
  \end{figure}
  
Fig. \ref{fig:epslam1} shows the transmission capacity for mmWave and lower frequency networks with a 10\% outage. Fig. \ref{fig:epslam1} shows the relationship between providing a higher SINR (and thus rate) to users while maintaining a constant outage constraint. As expected, the shortest dipole length can support the highest density of users. A linear increase in SINR (in dB) results in an exponential decrease in the density of users in the network. 

In Fig. \ref{fig:txcap_sub1}, both LOS and NLOS communication is allowed. If the dipole length is 25m, mmWave networks can allow a larger density. If the dipole length is 50m or 75m, however, lower-frequency networks can permit higher densities when the communication threshold is higher. This is because the mmWave network begins to be noise limited. Essentially, the blockage probability is larger than $\epsilon$; because of the longer link length (and increased path-loss exponent for NLOS communcation), there is no density that will meet the threshold requirements and the transmission capacity is 0. For the UHF network, the lower path-loss exponent and noise power permit a positive transmission capacity. Fig. \ref{fig:txcap_sub2} shows the improvement if communication is kept to LOS links. Because the communication is always LOS, the longer links can now support a positive transmission capacity for higher SINR thresholds.

\subsection{Area Spectral Efficiency}
 
   \begin{figure}
   \centering
   \begin{subfigure}{.5\textwidth}
   \centering
   \includegraphics[width=.65\linewidth]{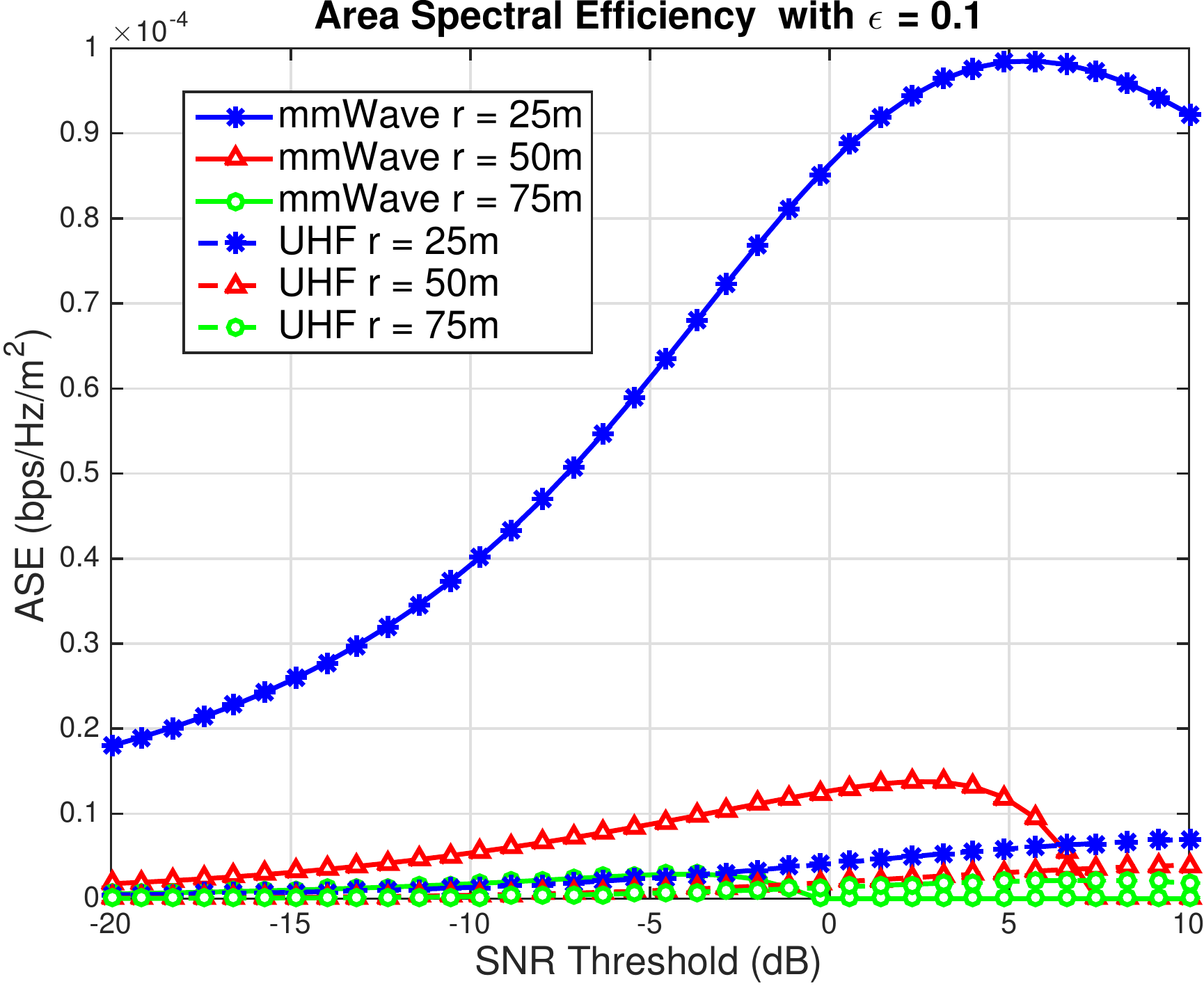}
  \caption{}
   \label{fig:ase_sub1}
   \end{subfigure}
   \begin{subfigure}{.5\textwidth}
   \centering
   \includegraphics[width=.65\linewidth]{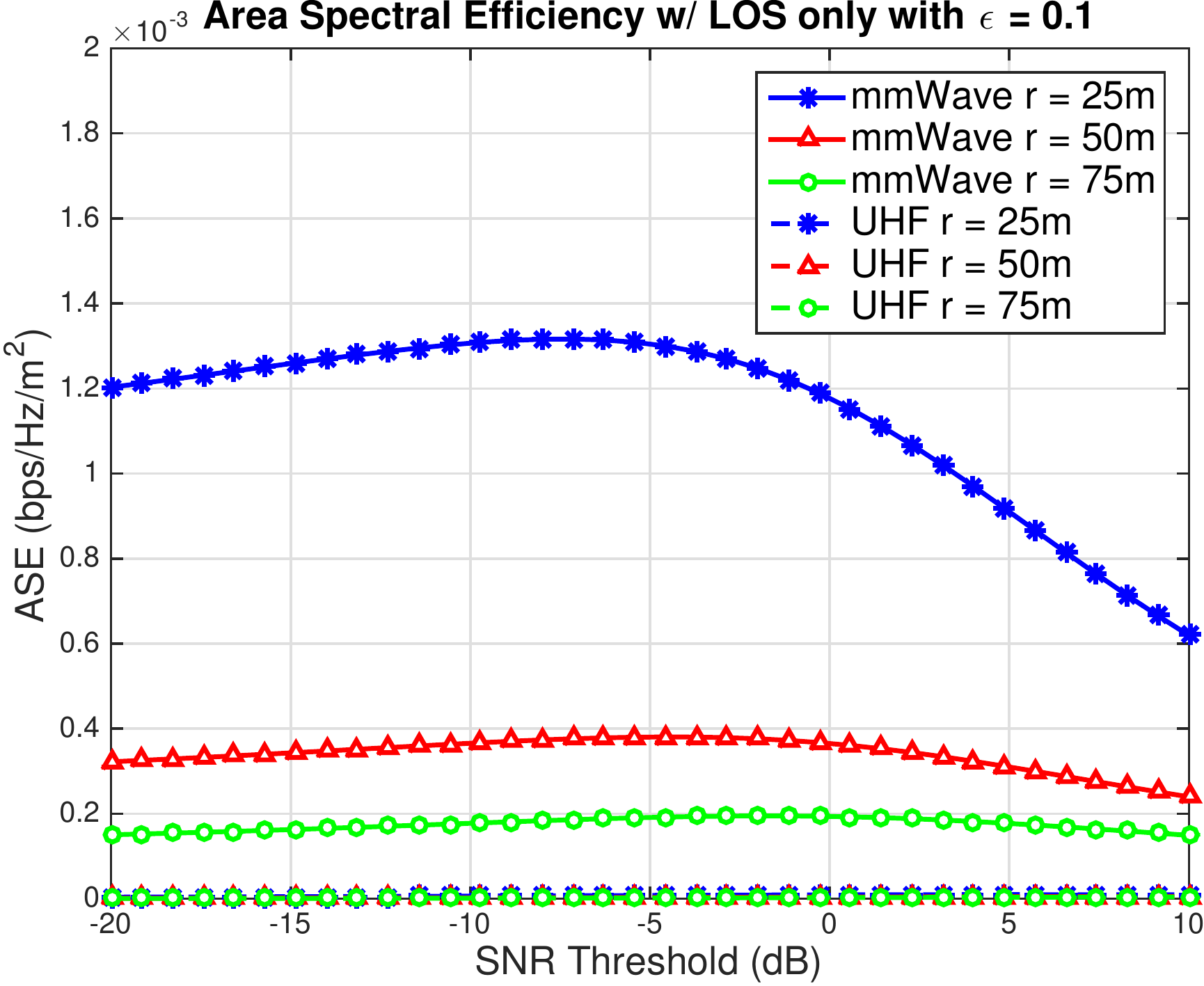}
   \caption{}
   \label{fig:ase_sub2}
   \end{subfigure}
	\caption{Area spectral efficiency of network with 10\% outage. If the dipole link is restricted to LOS (b), an order-of-magnitude improvement is shown over NLOS/LOS dipole links (a).}
 	\label{fig:eps1}
   \end{figure}

Similar trends are evident in Fig. \ref{fig:eps1}. The mmWave network has a 10$\times$ efficiency gain compared to UHF networks when the transmission capacity is non-zero. This gain is realized through the interference reduction in the directional antenna array and the increased path-loss exponent for NLOS links. Because buildings do not attenuate UHF as much, even the NLOS interference in a UHF network limits performance.

The shape of the curves suggests an optimal density with respect to ASE. This leads to the optimization problem
\begin{equation}
\lambda^\star = \underset{\lambda_\epsilon}{\operatorname{argmax}}  \hspace{2mm}\lambda_\epsilon\text{log}_2(1+T)(1-\epsilon). \label{lamstar}
\end{equation}
The numerical solution to this problem is the density corresponding to the largest ASE from Fig. \ref{fig:eps1}. We leave the exploration of analytical solutions to \eqref{lamstar} for future work. Fig. \ref{fig:opt} shows the numerically obtained $\lambda^\star$ from Fig. \ref{fig:ase_sub2}. The optimal density is exponentially decreasing in $r$. The optimal density, $\lambda^\star$, corresponds to an average neighbor distance 1/2 the link distance in the LOS-only (\emph{protocol gain}) case. \textbf{MmWave \emph{ad hoc} networks can not only support high density, but this density is best for overall network efficiency}. This is due to both the directional antennas and blockage. The blockage thins the interference PPP as shown in Section \ref{sec:LOSint}. The remaining LOS interferers are effectively \emph{pushed away}. The interference power from a close neighbor into the side-lobe (i.e. the power is heavily attenuated) is the same as that interferer being further away but using omni-directional antennas. Of course, if an interferer is in the main-lobe of the antenna, this phenomenon works against the receiver, but more often, it helps.

\begin{figure}
\centering
\includegraphics[width=0.65\columnwidth]{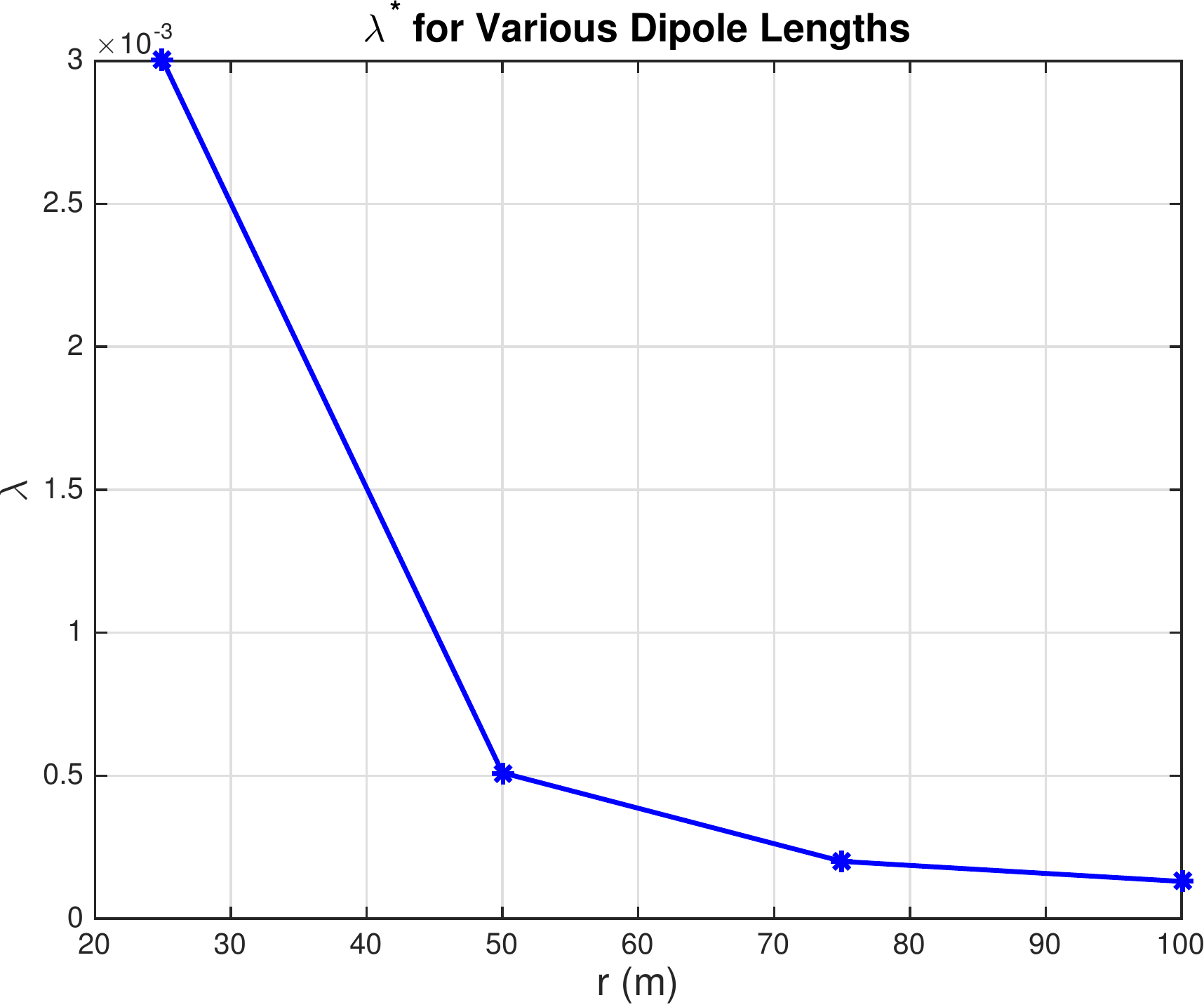}
	\caption{Optimal network density for various dipole lengths, subject to 10\% outage.}
 	\label{fig:opt}
\end{figure}

 \subsection{Rate Analysis}
Fig. \ref{fig:rate} shows the rate coverage probability, where $R = W \text{log}_2(1+T)$, and $W$ is the system bandwidth. From Theorem 1, a user will achieve SINR $>T$ with some probability as shown in Fig. \ref{fig:sub_lam_1} and Fig. \ref{fig:sub_lam_2} which leads to an achievable rate probability. For example, according to Fig. \ref{fig:sub_lam1_los}, a LOS mmWave communication link of 50m will have an SINR of at least 10dB 95\% of the time which, assuming Gaussian signaling, leads to a rate according to Shannon's equation. In Fig. \ref{fig:rate} we consider networks with both LOS and NLOS communication.
 
 \begin{figure}
 \centering
 \includegraphics[width=0.65\columnwidth]{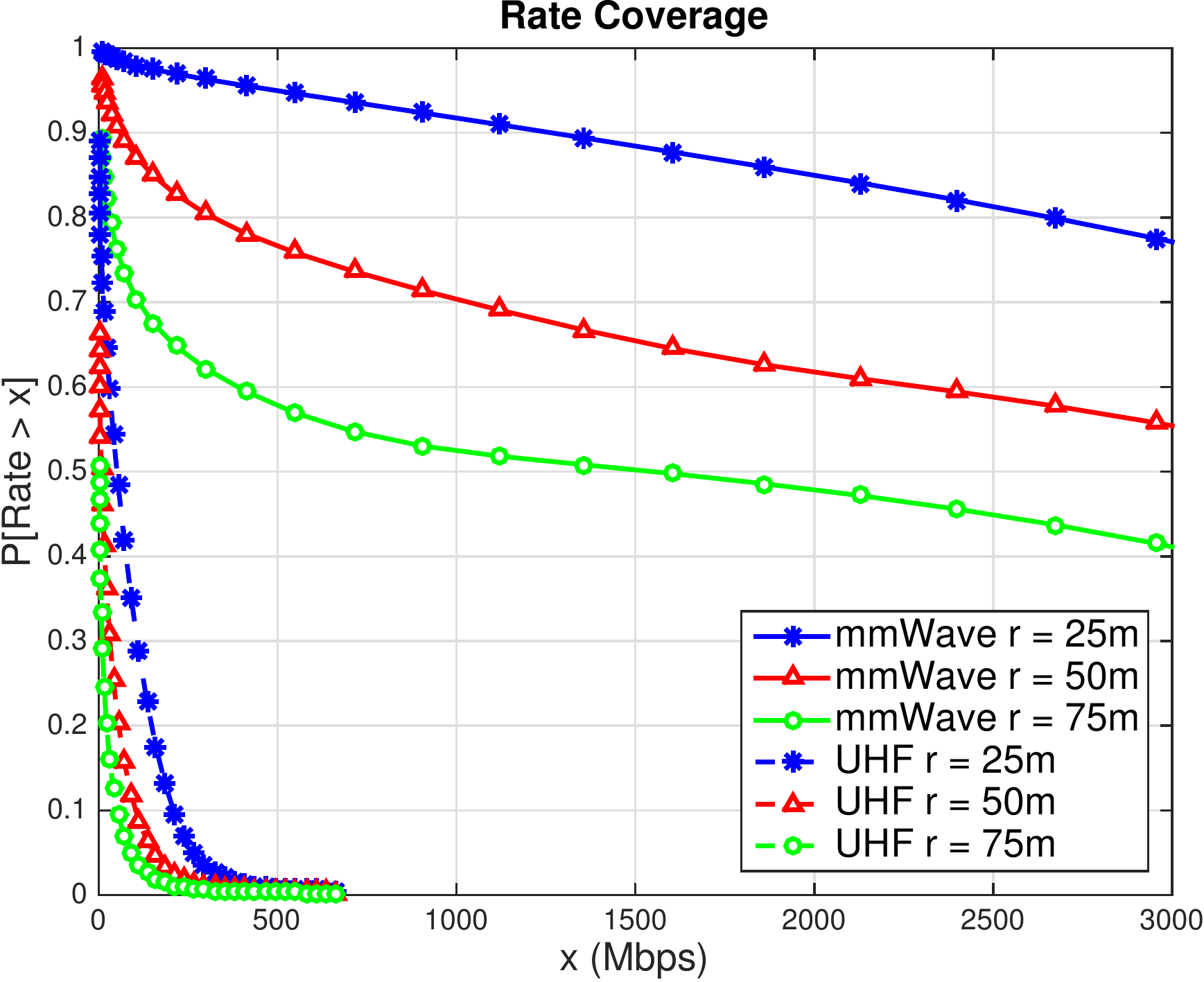}
 	\caption{mmWave \emph{ad hoc} networks provide significant increase in rate coverage over lower frequency networks.}
  	\label{fig:rate}
 \end{figure}
 
 The system bandwidth used in Fig. \ref{fig:rate} is 500MHz for the mmWave and 50MHz for the lower frequency system. While the bandwidth is only a 10$\times$ increase, we see orders-of-magnitude increase in the rate coverage for mmWave networks. All link lengths of the mmWave network support over 1Gbps a majority of the time.
 
 \subsection{INR Distribution}
 Figs. \ref{fig:noise_9}, \ref{fig:noise_30}, and \ref{fig:noise_90} show the INR CDF for three values of $\lambda$ for each of the beam patterns in Fig. \ref{fig:antenna_pic}. Indeed, in all antenna patterns, the sparsest network exhibits noise limited behavior. For example, the $\bbP[\text{INR} < 0\text{dB}] = 0.4$ for $30^\circ$ antennas in the sparest network. Yet, these results show compelling evidence that a mmWave \emph{ad hoc} network can still be considered interference limited. In dense networks (22m and 70m spacing), in all but the very narrow beam case, the network exhibits strong interference. Because of this, we urge caution when considering mmWave networks to be noise limited.
 \begin{figure}
 	\centering
 	\includegraphics[width=0.65\columnwidth]{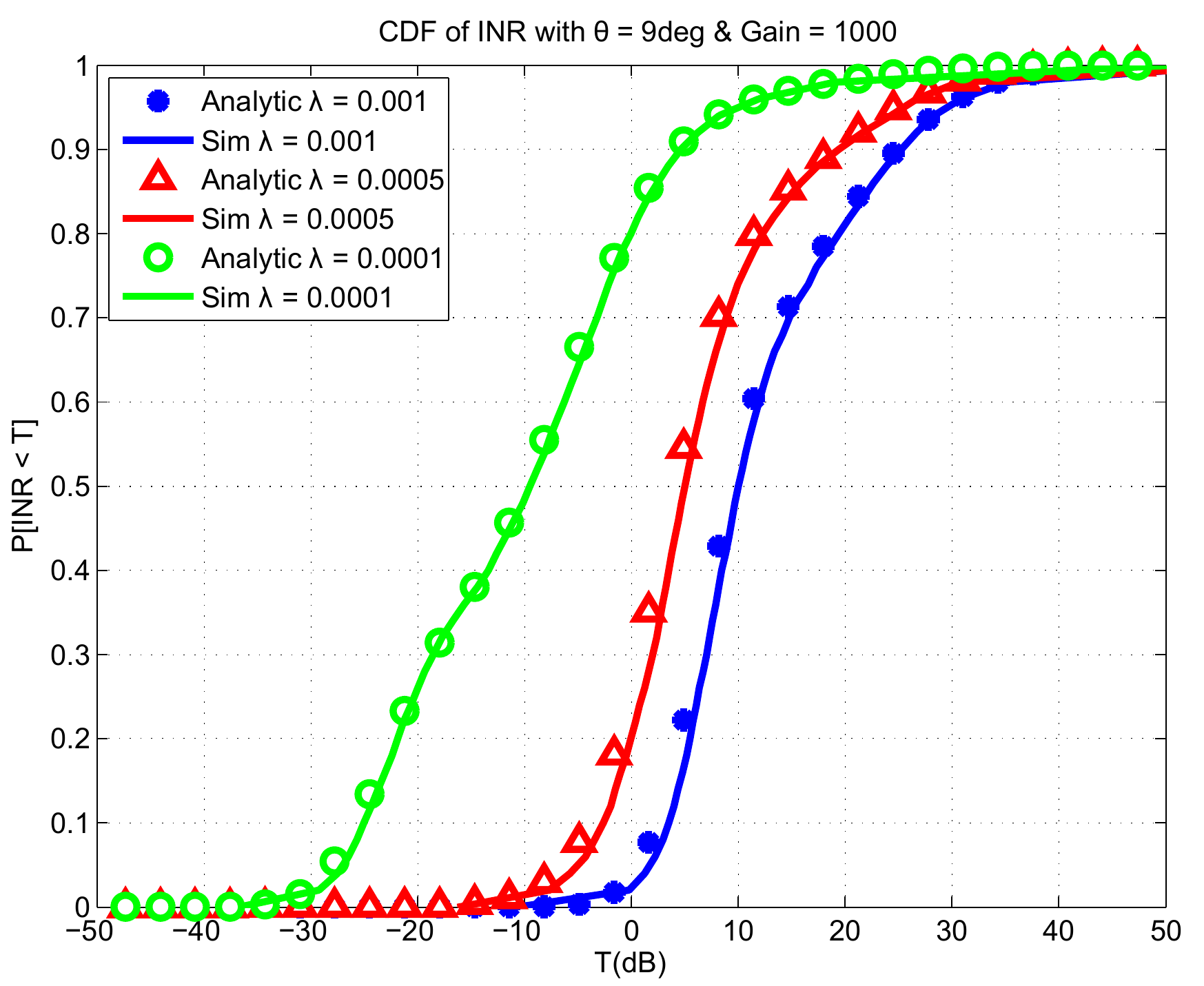}
 	\caption{The INR CDF for $\theta = 9^\circ$. With extreme beamforming, the network remains interference limited in all but the sparest network.}\label{fig:noise_9}
 \end{figure}
 \begin{figure}
 	\centering
 	\includegraphics[width=0.65\columnwidth]{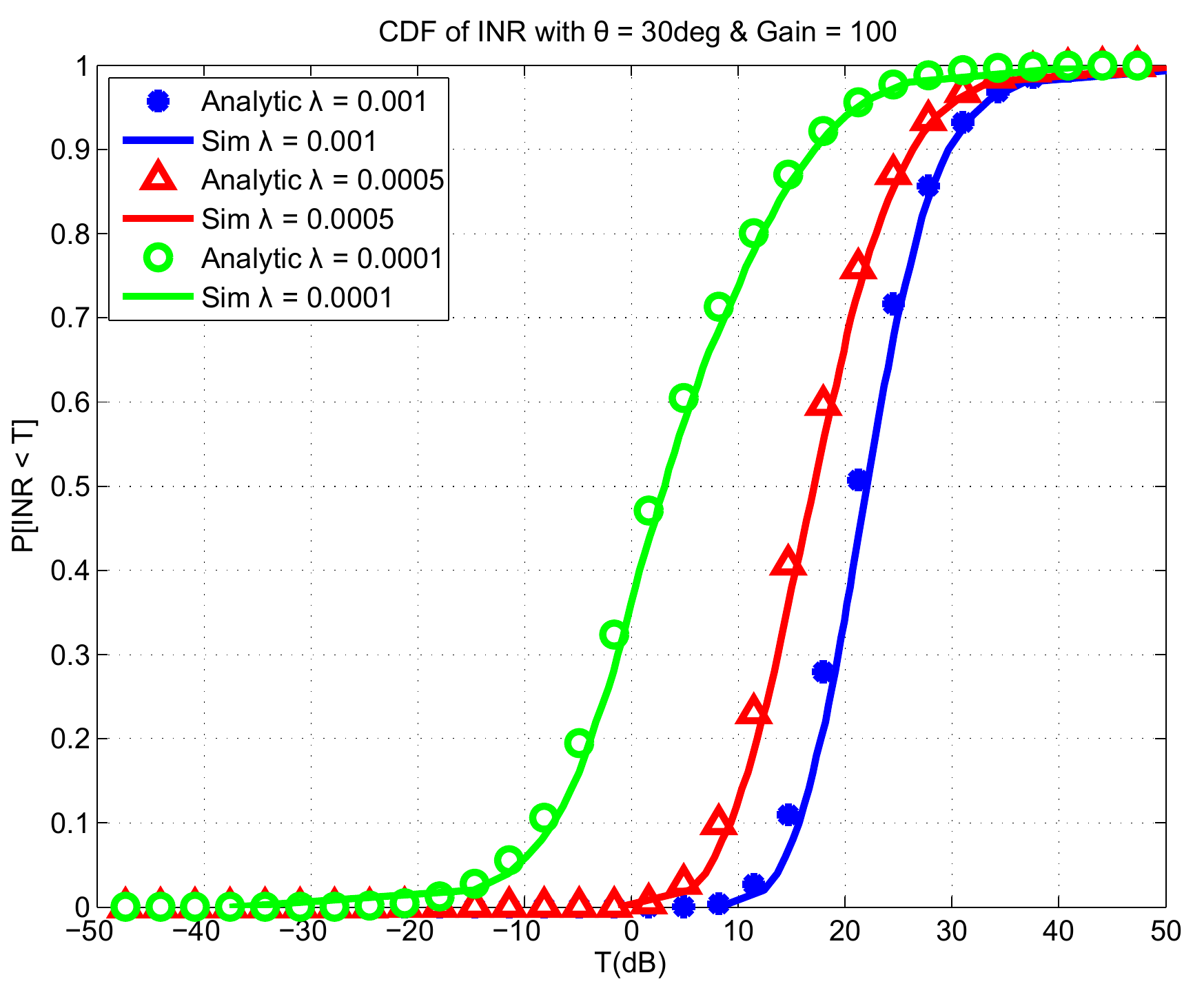}
 	\caption{The INR CDF for $\theta = 30^\circ$. In the sparsest network, the interference power is more dominant than the noise power (i.e. $\bbP[\mathrm{INR} < 0 \mathrm{dB}] = 0.4$ for the green circle network), but the red triangle curve shows that the more dense network is always interference limited.}\label{fig:noise_30}
 \end{figure}
 \begin{figure}
 	\centering
 	\includegraphics[width=0.65\columnwidth]{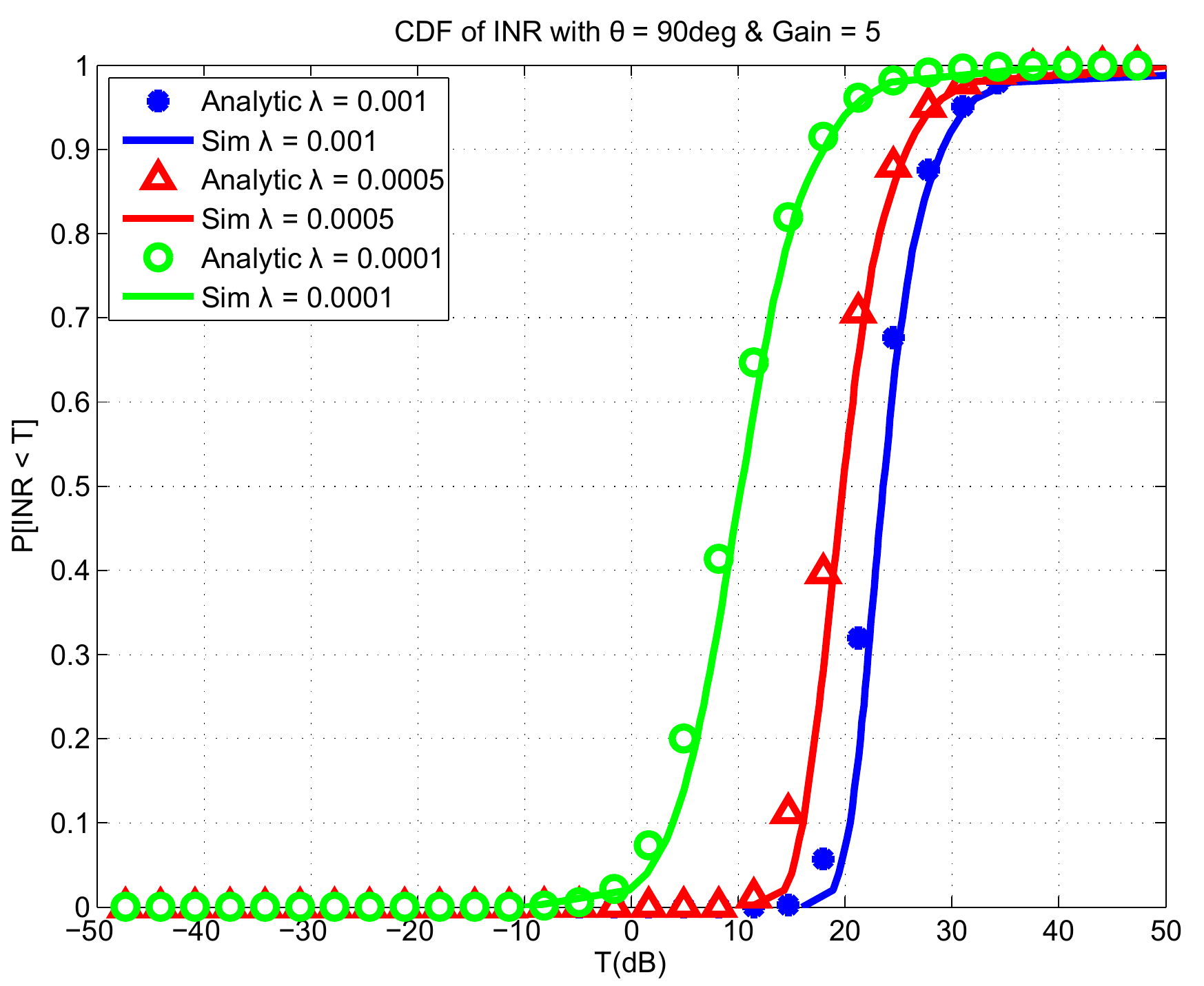}
 	\caption{The INR CDF for $\theta = 90^\circ$. In all networks, the interference power is nearly always more dominant than the noise power (i.e. $\bbP[\mathrm{INR} < 0 \mathrm{dB}] = 0.05$ for the green circle network).}	\label{fig:noise_90}
 \end{figure}
 
 Fig \ref{fig:losonly} shows the INR distribution if we ignore NLOS interference for when $\theta = 30^\circ$. It shows that for many mmWave networks the interference is largely driven by the LOS interference in the two denser networks. The CDF of the two denser networks in Fig. \ref{fig:losonly} is nearly identical to Fig. \ref{fig:noise_30} which indicates that NLOS interference plays no role at those densities. We believe this shows compelling evidence that interference cancellation may be useful, even at mmWave frequencies. In particular, eliminating LOS interference is most important.
 \begin{figure}
 	\centering
 	\includegraphics[width=0.65\columnwidth]{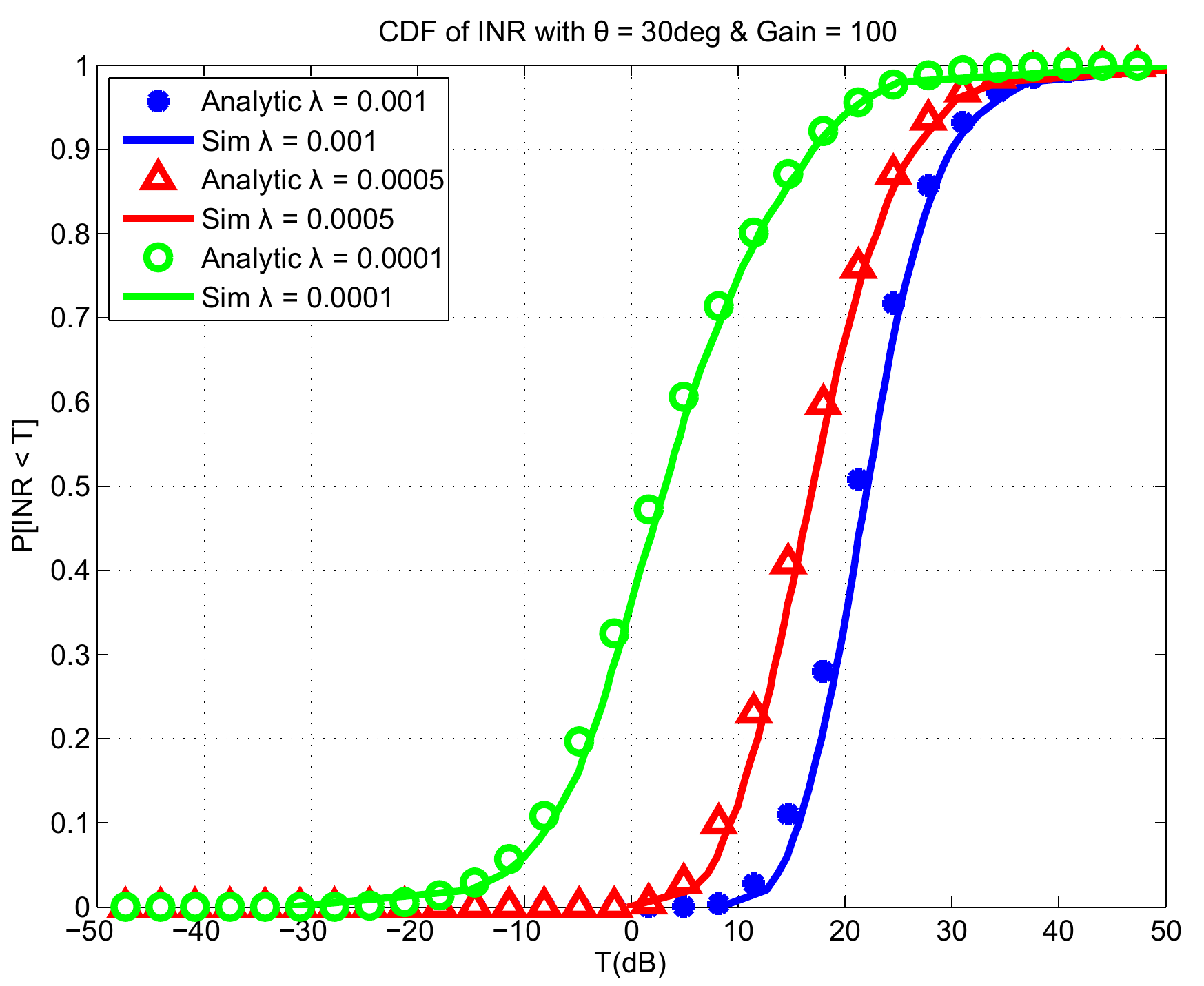}
 	\caption{The INR CDF for $\lambda = 5\times 10^{-5}$ and $\theta = 30^\circ$ with only LOS interference. Compare to Fig. \ref{fig:noise_30}, we find that the shape of INR distributions is largely determined by the LOS interference when the network is dense.}
 	\label{fig:losonly}
 \end{figure}
 
 In Fig. \ref{fig:inrtxcap}, the INR is shown for the transmission capacity of the networks from Figs. \ref{fig:txcap_sub1} \& \ref{fig:txcap_sub2}. If conditioned on LOS communication (i.e. \emph{LOS protocol-gain}), the networks support very dense deployments. As such, the INR is nearly always $>0$dB as shown in Fig. \ref{fig:inrtxcap}. If the network does not enforce a LOS-only transmission scheme, the transmission capacity is less. The interference, however, is not negligible for networks of 25m and 50m. If the communication link is 25m, the INR is $>0$dB 70\% of the time; if the link is 50m, the INR is less but still $>-10$dB roughly half the time.
  \begin{figure}
  	\centering
  	\includegraphics[width=0.65\columnwidth]{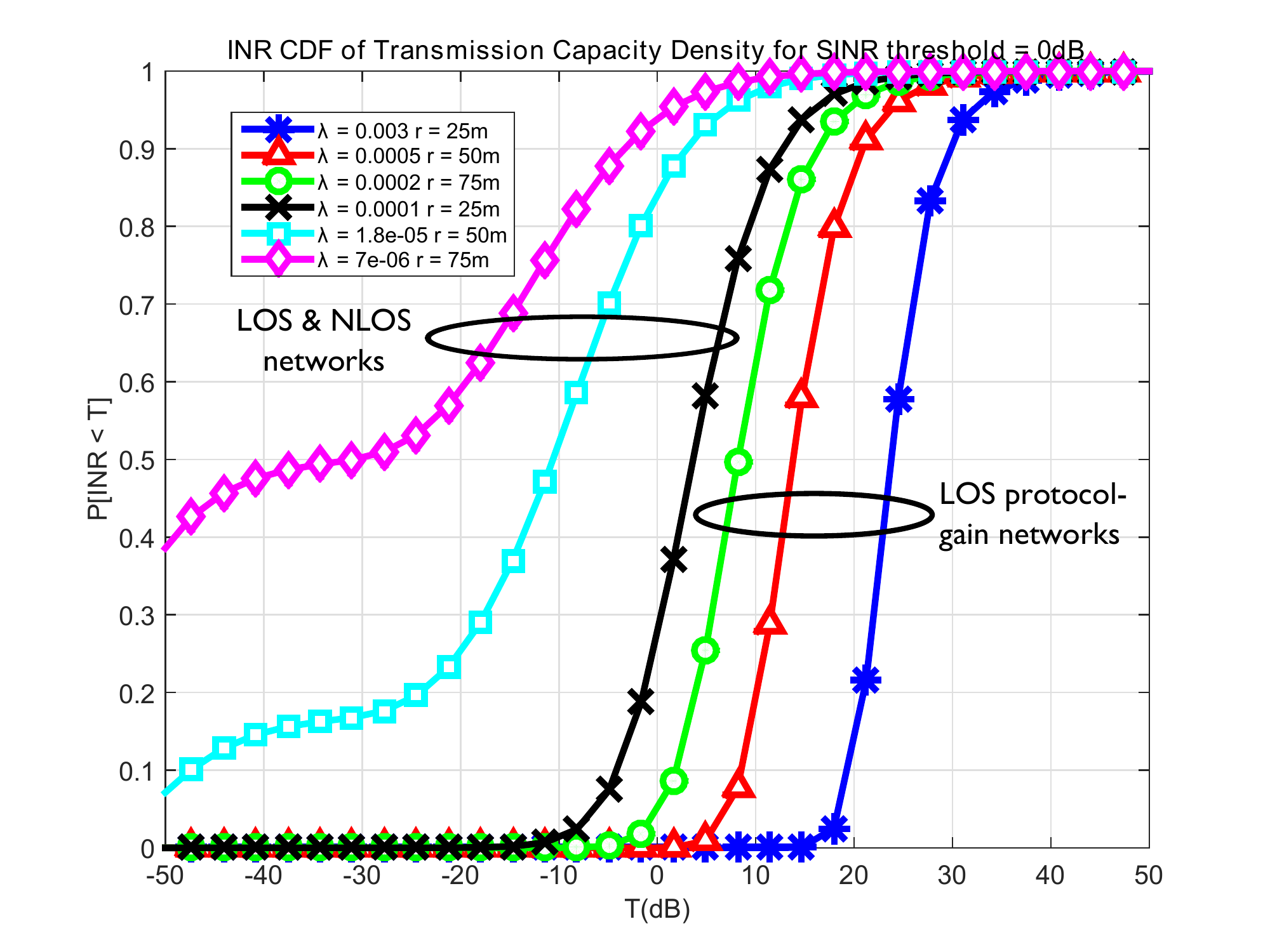}
  	\caption{The densities correspond to the transmission capacity from Figs. \ref{fig:txcap_sub1} \& \ref{fig:txcap_sub2} for SINR threshold on 0dB.}
  	\label{fig:inrtxcap}
  \end{figure}
 
  \subsection{Two-Way Communication Results}
  
  The results presented in this section consider a two-way system using bandwidth allocation to split resources. We show that, in asymmetric traffic, the transmission capacity of a two-way network can be vastly improved compared to equal bandwidth allocation or rate-proportional allocation. The two-way area spectral efficiency is compared to one-way area spectral efficiency. We show that 75\% of the one-way efficiency can be achieved for outage of 10\% which is a 100\% increase over the baseline equal allocation. In all the results, the dipole link length is 50m.  
  
  
  
 
 We consider asymmetric traffic. For example, in TCP assuming 1000 byte data packets, the receiver must reply with 40 byte $ACK$ packets \cite{tcp}. Hence, the rate asymmetry in TCP is $1/25$. The following results consider a system bandwidth of 100MHz, a forward rate requirement of 200Mbps, and a reverse link rate requirement of 8Mbps.
   
\begin{figure}
      \centering
      \includegraphics[width=0.65\columnwidth]{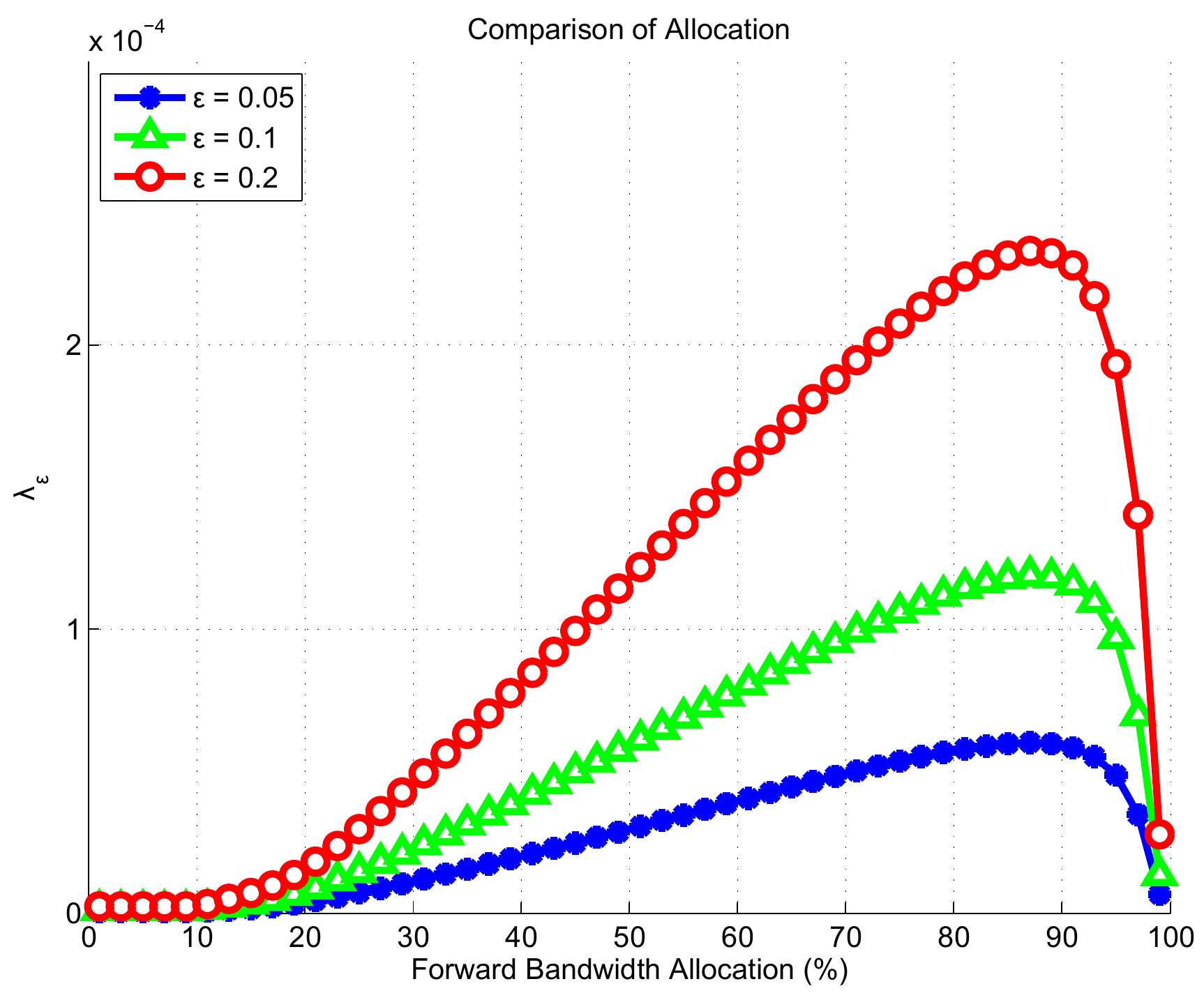}
      	\caption{The transmission capacity of a two-way network can be improved by allocating bandwidth in an optimal way.}
 \label{fig:opt_allocation}
\end{figure}
      
Fig. \ref{fig:opt_allocation} shows the transmission capacity as a function of forward bandwidth allocation. As more bandwidth is added to the forward link, the required $\text{SINR}_\text{F}$ decreases to meet the rate requirement. Because the reverse link rate requirement is quite small, the increase in $\text{SINR}_\text{R}$ does not change the SINR probability much (i.e. we are operating at very low $\text{SINR}_\text{R}$ which is where the SINR probability plateaus to 1). Fig. \ref{fig:opt_allocation} shows the naiveté of simply splitting the bandwidth in half. A nearly 2x improvement in transmission capacity is achieved by going from 50\% to the optimal allocation of 90\%. What is somewhat more surprising is that a 96\% split (i.e. splitting according to the rate requirement) results in nearly the same performance as a naive 50\% allocation.  Lastly, Fig. \ref{fig:opt_allocation} shows that this allocation is invariant to outage constraint. 

Fig \ref{fig:asym_ase} shows the performance gains in terms of area spectral efficiency that can be achieved by various bandwidth allocations. In all curves, the sum rate of the system is 208Mbps. As expected from Fig. \ref{fig:opt_allocation}, the area spectral efficiency is the worst in the naive 50/50 bandwidth allocation. The rate based (96\%/4\%) allocation performs better, but additional gains can be made by further optimizing the allocation. With the optimal allocation, the two-way system can achieve 75\% the area spectral efficiency of the one-way system. Because the one-way and two-way area spectral efficiency is linear in $\lambda_\epsilon$ and $\lambda_\epsilon^{\text{tw}}$, respectively, we can see the effect two-way communication has on the transmission capacity. If the users split the resources equally, considering the two-way constraint reduces the density by nearly a factor of 3. If the resources are split optimally, the network can support 2$\times$ the number of users from the equal split. This density is roughly 75\% of one-way density.

\begin{figure}
      \centering
      \includegraphics[width=0.65\columnwidth]{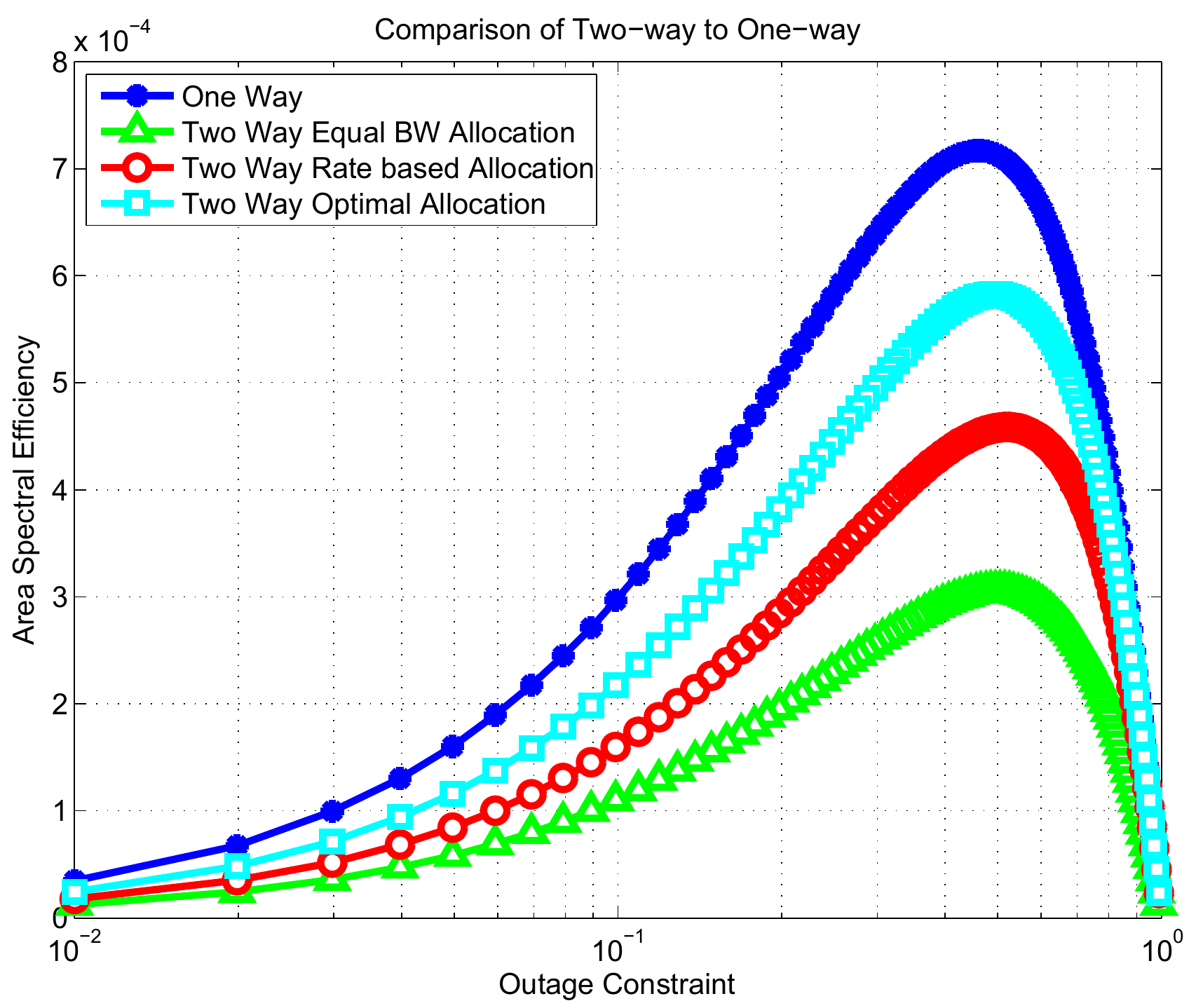}
      	\caption{Significant ASE gains can be achieved by intelligently allocating bandwidth.}
 \label{fig:asym_ase}
\end{figure}

\section{Conclusions}


We presented an analysis that characterized the performance of mmWave \emph{ad hoc} networks for both one-way and two-way communication. We showed that mmWave networks can improve on the performance and efficiency of UHF networks when considering both LOS and NLOS communication. Massive improvements in transmission capacity and area spectral efficiency (e.g. 10-100$\times$) are possible when only communicating over LOS links which motivates LOS aware protocols. Further, we showed the NLOS interference is negligible and LOS interference can still be the limiting factor for a mmWave \emph{ad hoc} network. This also motivates the need for LOS interference mitigation strategies. Lastly, by, understanding the requirements of the reverse link in the mmWave network for two way traffic, 75\% of the one-way capacity can be achieved which is twice as efficient as an equal allocation of resources.


\appendix
\emph{Proof of Lemma 1}: From \cite{gamma} Theorem 1,
\begin{equation}
\left[1-e^{-\beta x^p}\right]^{1/p} < \frac{\int_{0}^{x}e^{-t^p}dt}{\Gamma(1+1/p)}
\end{equation}
with $\beta = \left[\Gamma(1+1/p)\right]^{-p}$ and $p \in (0,1)$. It is shown in \cite{gamma} that
\begin{equation}
\int_{0}^{x}e^{-t^p}dt = \frac{1}{p}\gamma \left(\frac{1}{p},x^p\right)
\end{equation}
where $\gamma(\cdot,\cdot)$ is the lower incomplete gamma function. A normalized gamma random, $y \sim \Gamma(k,\theta)$, variable is such that the shape, $k$, and scale, $\theta$, are inverses of each other so that $\bbE[y] = 1$ (i.e $\theta = 1/k$). If we let $k = 1/p$ and $x^p = kz$, we have
\begin{equation}
\begin{split}
\left[1-e^{-\beta x^p}\right]^{1/p} &<\frac{1}{p}\frac{\gamma \left(\frac{1}{p},x^p\right)}{\Gamma(1+1/p)}\\
\left[1-e^{-\beta kz}\right]^{k} &< \frac{k\gamma \left(k,kz\right)}{\Gamma(1+k)}\\
\left[1-e^{-az}\right]^{k}&< \frac{\gamma \left(k,kz\right)}{\Gamma(k)} \\
 &=\bbP[y < z]
\end{split}
\end{equation}
with $a = k\left[\Gamma(1+k)\right]^{-1/k} = k (k!)^{-1/k}$.
\bibliographystyle{IEEEtran}
\bibliography{IEEEabrv,Tianyang_ad_hoc,mac}

\end{document}